\documentclass[8pt,twocolumn,DIV20]{scrartcl}
\pdfoutput=1

\usepackage{titling}
\usepackage{natbib}
\usepackage{amsmath}
\usepackage{amsfonts}
\usepackage{amsthm}
\usepackage{amssymb}
\usepackage[normal, bf]{caption}
\newcommand{\asection}[1]{\paragraph{#1}}
\newcommand{\href}[2]{\footnotesize\texttt{#1}}
\newcommand{\myqed}{}

\usepackage{hyperref}
\usepackage{hyperref}
\hypersetup{
    colorlinks,%
    citecolor=black,%
    filecolor=black,%
    linkcolor=black,%
    urlcolor=black
}

\usepackage{paperpackages}

\externaldocument[supp:]{preprint}
\externaldocument[main:]{preprint}


\begin{document}

\title{Protein Hypernetworks: a Logic Framework for Interaction Dependencies and Perturbation Effects in Protein Networks}
\author{
Johnannes K\"oster%
\protect\thanks{Bioinformatics for High-Throughput Technologies,
Algorithm Engineering, Computer Science 11, TU Dortmund, Germany}
\protect\thanks{Max Planck Institute of Molecular Physiology, Dortmund, Germany}
\and
Eli Zamir\protect\footnotemark[2]\mbox{ }\thanks{to whom correspondence should be addressed}
\and
Sven Rahmann\protect\footnotemark[1]\mbox{ }\protect\footnotemark[3]
}
\date{Inofficial Preprint, \today}
\maketitle

\begin{abstract}
\asection{Motivation:}
Protein interactions are fundamental building blocks of biochemical reaction systems underlying cellular functions.
The complexity and functionality of such systems emerge not from the protein interactions themselves but from the dependencies between these interactions.
Therefore, a comprehensive approach for integrating and using information about such dependencies is required.
\asection{Results:}
We present an approach for endowing protein networks with interaction dependencies using propositional logic, thereby obtaining \emph{protein hypernetworks}.
First we demonstrate how this framework straightforwardly improves the prediction of protein complexes.
Next we show that modeling protein perturbations in hypernetworks, rather than in networks, allows to better infer the functional necessity of proteins for yeast.
Furthermore, hypernetworks improve the prediction of synthetic lethal interactions in yeast, indicating their capability to capture high-order functional relations between proteins.
\asection{Conclusion:}
Protein hypernetworks are a consistent formal framework for modeling dependencies between protein interactions within protein networks.
First applications of protein hypernetworks on the yeast interactome indicate their value for inferring functional features of complex biochemical systems.
\asection{Availability:}
Data and software is publicly available at\\
\href{http://www.rahmannlab.de/research/hypernetworks}{http://www.rahmannlab.de/research/hypernetworks}.
\asection{Contact:}
\href{Eli.Zamir@mpi-dortmund.mpg.de}{Eli.Zamir@mpi-dortmund.mpg.de},\\
\href{Sven.Rahmann@tu-dortmund.de}{Sven.Rahmann@tu-dortmund.de}
\end{abstract}

\section{Introduction}
A fundamental challenge in systems biology is understanding how cellular functions emerge from the collective action of interacting proteins. 
Ultimately such understanding could be achieved through a complete quantitative biochemical description of the system, including the concentrations and spatial distribution of all involved proteins and the kinetic constants of their interactions 
\citep{Hughey:2010fk,Kholodenko:2006uq}.
However, despite the progress in technologies for measuring these parameters in cells, completing such a description for large intracellular biochemical systems is still beyond reach.
In a complementary front, high-throughput protein-protein interaction (PPI) detection techniques, including yeast two-hybrid and mass spectrometry \citep{Walther:2010uq,Parrish:2006fk}, can provide static snapshots of complete interactomes, as demonstrated with several model organisms.
The obtained information is typically modeled as networks -- simple graphs with nodes and edges corresponding to the proteins and their interactions, respectively.
However, such a data structure cannot represent information about how protein interactions depend on each other.

A key mechanism generating interaction dependencies is allosteric regulation, in which a protein undergoes conformational change upon one interaction which affects its other interactions \citep{Laskowski:2009fk}.
Another common type of interaction dependencies is mutual exclusiveness arising from steric hindrance that prevents proteins from binding simultaneously to too close or identical protein domains.
Protein interaction dependencies determine the properties of biochemical systems, and therefore it is essential to comprehensively consider them.
Importantly, vast information about interaction dependencies can be already obtained through database mining, and can be further expanded by high-throughput experimental approaches (see Discussion).
However, a comprehensive approach to integrate this knowledge for getting a better understanding of large biochemical systems is still required.

Recent studies indicate that considering mutual exclusiveness between interactions improves the quality of protein complex prediction in yeast \citep{Ozawa:2010fk,jung2010}.
Here, we further expand and generalize this potential by enabling on one hand the integration of diverse types of interaction dependencies and on the other hand the exploration of different aspects of the system.
We use \emph{propositional logic} to model interaction constraints, and provide a flexible framework for their system-wise representation, called \emph{protein hypernetworks} (Section~\ref{sec:approach}).
Next, we show how to mine hypernetworks for useful information, exemplified here as improving the quality of protein complex prediction (Section~\ref{sec:complexes}).
Furthermore, our approach allows ranking the importance of each protein in a biochemical system based not only on its interactions but also on their dependencies.
We demonstrate that such considerations help predicting which proteins are essential for yeast viability (Section~\ref{sec:functionalimportance}).
Finally, we discuss how our approach synergizes with current efforts to obtain system-level understanding of complex biochemical systems.


\section{Modeling Approach}
\label{sec:approach}

\subsection{Protein Hypernetworks} 
\label{sec:hypernetworks} 

A protein network is commonly described as an undirected graph $(\proteins,\interactions)$ with a vertex $p \in \proteins$ for each protein and an undirected edge $\{p,p'\} \in \interactions$ for each possible interaction.
We first develop an approach for incorporating interaction dependencies into this description, using propositional logic formulas.

The propositional logic $\prop(Q)$ is the set of all propositional logic formulas over the propositions~$Q$ (the atomic units of the logic).
This is the smallest set of formulas such that $q$ itself is a formula for all $q\in Q$
and that is closed under the following operations: For $\phi,\phi' \in \prop(Q)$, all of $\lnot \phi$, $\phi \land \phi'$, $\phi \lor \phi'$, and $\phi \Rightarrow \phi'$ are in $\prop(Q)$ as well.
The operators $\lnot, \land, \lor, \Rightarrow$ have the usual semantics ``not'', ``and'', ``or'', and ``implies'', respectively.
Note that the implication $\phi\Rightarrow\phi'$ is equivalent to $(\lnot\phi \lor \phi')$. 
As propositions $Q$, we use both proteins $\proteins$ and interactions $\interactions$, so $Q := \proteins \cup \interactions$.
A constraint is a formula with a particular structure over these propositions.
\begin{definition}[Constraint]
    A \df{constraint} is a propositional logic formula of the form $q \Rightarrow \psi$
    with $q \in \propositions$ and $\psi \in \prop(\propositions)$.
    With $\constraints(\propositions) \subseteq \prop(\propositions)$ we denote the set of all constraints.
\end{definition}
A constraint $q \Rightarrow \psi$ restricts the satisfiability of $q$ by the satisfiability of $\psi$.
In other words: if $q$ is satisfied, then the same has to hold for $\psi$.
A constraint $q \Rightarrow \psi$ is equivalent to the disjunction $\lnot q \lor \psi$.
We call the disjunct $\lnot q$ the \emph{default} or \emph{inactive} case for the obvious reason that if $q$ is not true, then $\psi$ does not need to be satisfied.
For example (see Fig.~\ref{fig:complexes}a), the dependency of an interaction $i$ on an allosteric effect due to a scaffold interaction $j$ can be formulated by the constraint $i \Rightarrow j$.
Mutual exclusiveness of two interactions $i, j \in I$ can be modelled by the two constraints $i \Rightarrow \lnot j$ and $j \Rightarrow \lnot i$.
The usage of propositional logic allows also to define constraints of higher order:
An interaction $i$ could be either dependent on two scaffold interactions $j_1$ and $j_2$ or compete with an interaction $j_3$, modeled by the constraint $i \Rightarrow ((j_1 \land j_2) \lor \lnot j_3)$.

Now, we can define protein hypernetworks as a set of proteins (nodes) connected by interactions (edges) extended by a set of constraints (dependencies between nodes or edges):

\begin{definition}[Protein Hypernetwork]
Let $\proteins$ and $\interactions$ be sets of proteins and interactions.
Let $C \subseteq \constraints(\propositions)$ be a set of constraints that contains the \df{default constraints} $i \Rightarrow p \land p'$ for each interaction $i = \{p,p'\} \in I$.
Then the triple $(P, I, C)$ is called a \df{protein hypernetwork}.
\end{definition}
Fig.~\ref{fig:complexes}a shows an example protein hypernetwork. While a protein hypernetwork is not a hypergraph (with hyperedges) in the classical sense, the name is appropriate because the constraints describe dependencies between the edges, which could be explicitly transformed into hyperedges, e.g., using minimal network states (cf.\ Sec.~\ref{sec:mining}).

\begin{figure*}[t!]\centering
\includegraphics[width=\textwidth]{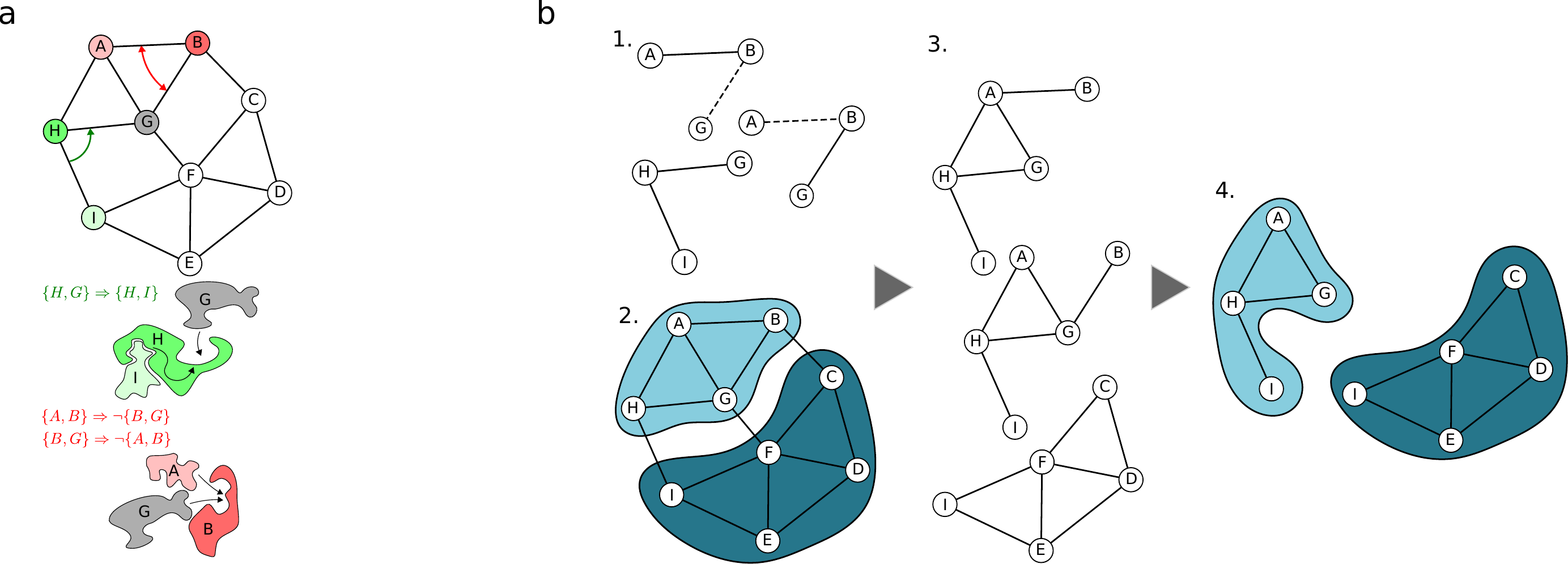}
\caption{\label{fig:complexes}(a) Principle of protein-hypernetwork construction.
A protein network (nodes and black edges) is overlaid with two interaction constraints: mutual exclusive interactions (top arc arrow) and activating allosteric interactions (lower arc arrow).
Propositional logic formulas for these interaction constraints and the molecular mechanism generating them are shown.
The protein hypernetwork is the plain network together with all such propositional logic constraints.
\mbox{ }
(b) Protein complex prediction in four steps (see Sec.~\ref{sec:complexes:algorithm} for algorithmic details):
(1) Computation of minimal network states (Sec.~\ref{sec:mining}),
(2) prediction of initial protein complexes,
(3) computation of simultaneously possible protein subnetworks,
(4) refinement of the predicted complexes.}
\end{figure*}

\subsection{Minimal Network States}
\label{sec:mining}

Following the incorporation of constraints in a protein hypernetwork, we now explain how to sum and propagate their effects in the system.
The key idea is that it is sufficient to examine the implications for each protein or interaction $q \in \propositions$ separately first, and then combine the information in a systematic way.
We formalize this idea by defining sets of \df{minimal network states}.
A minimal network state of~$q$ tells us which other proteins or interactions are \emph{necessary} or \emph{impossible} to occur simultaneously with~$q$.
For each~$q\in\propositions$, we define a \df{minimal network state formula}, for which we then find certain satisfying models, which in turn define minimal network states.

\begin{definition}[Minimal network state formula]\label{def:minstate:formula}
Let $(\proteins,\interactions,C)$ be a protein hypernetwork.
For $q \in \propositions$, the \emph{minimal network state formula} of~$q$ is
\[ \MNS_{(\proteins,\interactions,C)}(q) :=  \MNS(q) := q \land \bigwedge_{c \in C} c \,. \]
\end{definition}

A solution for a propositional logic formula is captured by a \emph{satisfying model} or \emph{interpretation} given by a map $\alpha: \propositions \rightarrow \{0,1\}$ that assigns a truth value to each proposition.
A formula is \emph{satisfiable} if any satisfying model exists.
We assume that $\MNS(q)$ is satisfiable for all $q \in \propositions$, i.e., each single protein or interaction by itself is compatible with all constraints.

For example, consider propositions $Q=\{q_1,q_2\}$ and a formula $\phi = \lnot q_1 \land (q_1 \lor q_2)$.
The only satisfying model is $\propmodel:\nobreak q_1\mapsto\nobreak 0,\, q_2\mapsto\nobreak 1$.
In the protein hypernetworks framework, we interpret a model $\propmodel$ as follows:
A protein or interaction $q$ is said to be \df{possible} in $\propmodel$ iff $\propmodel(q)=1$.
All possible proteins and interactions may (but need not) exist simultaneously (spatially and temporally) in the cell.

There can be many satisfying models for $\MNS(q)$.
Among these, we wish to enumerate all \df{minimally constrained satisfying models} (MCSMs).
A suitable method for finding them is the tableau calculus for propositional logic \citep{smullyan1995}.
In a nutshell, the tableau algorithm decomposes a formula into its parts.
It accumulates conjuncts, branches on disjuncts, and backtracks when a contradiction is encountered.
More details are given in Sec.~\ref{supp:sec:tableau} of the Supplement.
For finding MCSMs, our custom implementation ensures that for each constraint $q\Rightarrow \psi$ (i.e., disjunction $\lnot q \lor \psi$), the default case $\lnot q$ is explored first, and that $\psi$ is expanded only if the constraint is necessarily active, in order to avoid artificially constrained models.

The general problem of deciding whether any given propositional logic formula $\phi$ is satisfiable is NP-complete.
However, $\MNS(q)$ has a special structure: it is a conjunction of a proposition and of (many) constraints.
If all constraints are of a particularly simple structure, we can prove a linear running time; see Sec.~\ref{supp:sec:tableau} in the Supplement.

Each MCSM~$\propmodel$ defines a minimal network state, consisting of both necessary and impossible entities.
The intuition is that the necessary entities~$k$ are simply the ``true'' ones ($\propmodel(k)=1$), and that the impossible entities are those that are \emph{explicitly} forbidden by an active constraint.
\begin{definition}[Minimal Network State]\label{def:minstate}
Let $(\proteins,\interactions,C)$ be a protein hypernetwork and $q \in \propositions$.
Let $\propmodel$ be a MCSM of $\MNS(q)$.
We define sets of \emph{necessary} and \emph{impossible} proteins or interactions, respectively, as
\begin{align*}
 \Nec_{\propmodel}
 &:= \{k \in \propositions \dash  \propmodel(k)=1 \}, \\
 \Imp_{\propmodel}
 &:= \{k \in \propositions \dash  \exists \text{ constraint } (q' \Rightarrow \psi) \in C\\
 &\quad \text{with } \propmodel(q')=1 \text{ and } \psi\land k \text{ is unsatisfiable}.\}.
\end{align*}
The pair $(\Nec_{\propmodel}, \Imp_{\propmodel})$ is called a \df{minimal network state} for $q$ (belonging to the MCSM $\propmodel$).
\end{definition}
For each proposition~$q$, there can be several minimal network states.
We write $M_q$ for the set of all minimal network states for~$q$.
We call $M := M_{\hypernetwork} :=\bigcup_q\, M_q$ the set of all minimal network states for all proteins and interactions.

Now we define a relation \df{clashing}, describing that two minimal network states cannot be combined without producing a conflict.
\begin{definition}[Clashing Minimal Network States]\label{def:clashing}
Two minimal network states $(\Nec,\Imp)$ and $(\Nec',\Imp')$ are \df{clashing}
iff $\Nec \cap \Imp' \neq \emptyset$ or $\Imp \cap \Nec' \neq \emptyset$.
\end{definition}

As we prove in Theorem~\ref{thm:minstate}, in order to know if two proteins or interactions are simultaneously possible, it is sufficient to determine whether any pair of non-clashing minimal network states exists for them.

\begin{theorem}\label{thm:minstate}
Let $(\proteins,\interactions,C)$ be a protein hypernetwork.
Let $q,q' \in \propositions$ be two proteins or interactions, $q \neq q'$.
Assume that there exists a non-clashing pair of minimal network states $(m, m') \in M_{q} \times M_{q'}$.
Then $q$ and $q'$ are possible simultaneously, i.e., the following formula is satisfiable.
\[ \xi := \Big( \bigwedge_{c \in C} c \Big) \land q \land q'. \]%
\end{theorem}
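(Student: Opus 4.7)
The plan is to construct an explicit satisfying model $\beta$ of $\xi$ from the two MCSMs underlying the given non-clashing pair. Let $\alpha$ be the MCSM of $\MNS(q)$ that produces $m=(\Nec_\alpha,\Imp_\alpha)$ and let $\alpha'$ be the MCSM of $\MNS(q')$ that produces $m'=(\Nec_{\alpha'},\Imp_{\alpha'})$. By Definition~\ref{def:minstate}, both $\alpha$ and $\alpha'$ satisfy every constraint $c\in C$, with $\alpha(q)=1$ and $\alpha'(q')=1$.

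Define the candidate model pointwise as the union assignment $\beta(k):=\alpha(k)\lor\alpha'(k)$, i.e., the characteristic function of $\Nec_\alpha\cup\Nec_{\alpha'}$. Then $\beta(q)=\beta(q')=1$ is immediate, so it remains to verify $\beta\models c$ for every constraint $c=(r\Rightarrow\psi)\in C$. Case-split on $\beta(r)$: if $\beta(r)=0$ the implication is vacuous; otherwise $\beta(r)=1$, and by symmetry in $\alpha\leftrightarrow\alpha'$ one may assume $\alpha(r)=1$, so $\alpha\models\psi$. The real task is to deduce $\beta\models\psi$ from $\alpha\models\psi$, pointwise monotonicity $\beta\geq\alpha$, and the non-clashing hypothesis $\Imp_\alpha\cap\Nec_{\alpha'}=\emptyset$ (with the symmetric statement for $\alpha'$ used in the other case).

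I would carry out this step by putting $\psi$ in negation normal form and running a structural induction. A positive literal $k$ that is true under $\alpha$ stays true under the larger $\beta$. A negative literal $\lnot k$ that is true under $\alpha$ is the place where the non-clashing hypothesis is used: if one had $\beta(k)=1$ then $k\in\Nec_{\alpha'}\setminus\Nec_\alpha$, while the constraint $r\Rightarrow\lnot k$ is active in $\alpha$ and $\lnot k\land k$ is unsatisfiable, so $k\in\Imp_\alpha$, contradicting Definition~\ref{def:clashing}; hence $\beta(k)=0$ as needed. Conjunctions are handled by applying the induction hypothesis to both subformulas.

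The disjunctive step $\psi=\psi_1\lor\psi_2$ is where I expect the main obstacle to lie. A naive induction on the disjunct $\psi_i$ satisfied by $\alpha$ is not automatic, because the non-clashing property was formulated at the level of the constraint consequent $\psi$, and a proper subformula $\psi_i$ can rule out propositions that the full $\psi$ does not, so the induction hypothesis does not descend for free. Resolving this requires invoking the particular construction of the MCSM given in Section~\ref{sec:mining}: because the tableau explores defaults first and only expands a disjunct when forced, the branch chosen by $\alpha$ is precisely the one compatible with the other active constraints, and every propagation it performs is recorded in $\Nec_\alpha$ while every explicit forbiddance is recorded in $\Imp_\alpha$. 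Making this precise—showing that the disjunct selected by $\alpha$'s tableau remains satisfied under the extra propositions added from $\Nec_{\alpha'}$ exactly because $\Imp_\alpha\cap\Nec_{\alpha'}=\emptyset$—is the heart of the argument and the part I would develop most carefully, likely via a joint invariant on the tableau for $\alpha$ and the symmetric one for $\alpha'$.
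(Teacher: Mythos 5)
Your construction is the same one the paper uses: the paper also takes the union model (set exactly $\True:=\Nec\cup\Nec'$ to true, everything else to false), notes $q\in\Nec$, $q'\in\Nec'$, and splits on whether the antecedent $r$ of each constraint is true. Where you differ is in how much of the remaining work you acknowledge. The paper disposes of the active case in a single sentence -- the constraint ``is satisfied in $\bigwedge_{c\in C}c\land q$'' and ``is not clashing with $q'$'', ``therefore it is also satisfied in $\xi$'' -- with no induction over the consequent $\psi$ at all. Your literal/conjunction analysis is thus more explicit than the published argument, and your use of the definition of $\Imp$ for negative literals (if $\psi$ entails $\lnot k$ for an active constraint then $k\in\Imp$, so non-clashing keeps $k$ out of $\Nec'$) is exactly the intended role of the clashing relation.

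The obstacle you flag at disjunctive consequents is genuine, and it is not resolved by the paper either; it is precisely the step the paper asserts without argument. Moreover, your proposed repair via a joint invariant on the two tableaux is unlikely to close it as stated: for a consequent such as $\lnot a\lor\lnot b$, neither $a\land(\lnot a\lor\lnot b)$ nor $b\land(\lnot a\lor\lnot b)$ is unsatisfiable, so neither $a$ nor $b$ enters $\Imp$, the branch actually chosen by the tableau is nowhere recorded in the pair $(\Nec,\Imp)$, and the non-clashing hypothesis therefore gives no protection when the other state forces the remaining disjunct's proposition true (e.g.\ $q\Rightarrow a$, $q\Rightarrow(\lnot a\lor\lnot b)$, $q'\Rightarrow b$ yields a non-clashing pair although $\xi$ is unsatisfiable). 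Closing the disjunctive case needs either a restriction on the constraint form -- for literal consequents $r\Rightarrow\ell$, which cover the mutual-exclusion and scaffold constraints the paper actually uses, your induction already goes through -- or a strengthened definition of $\Imp$ that records branch commitments, not merely a more careful bookkeeping of the existing tableau. So: same route as the paper, with your honestly flagged gap sitting exactly on top of the paper's thinnest step.
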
%
\begin{proof}\mbox{ }
Let $m = (\Nec, \Imp)\in M_{q}$ and $m' = (\Nec', \Imp')\in M_{q'}$ be non-clashing; 
we show that $\xi$ is satisfiable by defining a satisfiying model $\propmodel$.
Define $\True := \Nec \cup \Nec'$ and $\False := \Imp \cup \Imp'$.
Since $m$ and $m'$ are not clashing, $\True \cap \False = \emptyset$.
Let $\assign{r} := 1$ for $r \in \True$, and $\assign{r}:=0$ otherwise.
We show that~$\propmodel$ satisfies all parts of $\xi$.

The propositions $q$ and $q'$ in $\xi$ are satisfied since $q \in \Nec$ and $q' \in \Nec'$, so $\assign{q}=\assign{q'}=1$.

For each $c^* = (r \Rightarrow \psi)$ in the conjunction $\bigwedge_{c \in C}\, c$, there may appear two cases:
$\assign{r}=0$ or $\assign{r}=1$.
If $\assign{r} = 0$,
then $c$ is satisfied regardless of the satisfaction of $\psi$ because of the implication semantics.
If $\assign{r} = 1$, or equivalently $r \in \True$, then $r\in\Nec$ or $r\in\Nec'$ (or both).
First, consider the case that $r \in \Nec$.
By assumption, $c^*\in C$ is then satisfied in $\bigwedge_{c \in C} c \land q$.
Additionally, it is not clashing with $q'$ because $\True \cap \False = \emptyset$.
Therefore, it is also satisfied in $\xi$.
The case $r\in\Nec'$ is analogous.
\myqed
\end{proof}
Minimal network states are the basis for further inferences on hypernetworks, as we demonstrate in Sections~\ref{sec:complexes} and~\ref{sec:functionalimportance}.
First, however, we show that perturbations can be easily incorporated into the framework.

\subsection{Inclusion of Perturbation Effects}
\label{sec:perturbationeffects}
The protein hypernetwork framework allows to systematically compute consequences of perturbations.
We distinguish between \emph{perturbed} and \emph{affected} proteins or interactions:
A perturbed one is the direct target of an experimental intervention which causes its complete removal from the system (e.g. by gene knock-down for proteins or point mutations for interactions), whereas an affected one is altered due to the propagation of the perturbation in the hypernetwork. 
Assume that proteins $P_\downarrow \subseteq P$ and interactions $I_\downarrow \subseteq I$ are perturbed, and thus removed from the system.
The problem at hand is to compute all affected proteins and interactions.
This is done by recursively removing minimal network states $m=(\Nec, \Imp)$ that necessitate a perturbed or affected entity (protein or interaction) $q \in \Nec$, while counting a protein or interaction as affected once it has no minimal network state left.
Formally, we proceed as follows.

\begin{definition}\label{def:perturbation}
Let $M_q$ be the set of all minimal network states for entity~$q$ (Definition~\ref{def:minstate}),
and let $M\subset\mns$ be any subset of all minimal network states.
For a set of entities $A\subset\propositions$, let
\[ \bar{M}_A := \{(\Nec, \Imp) \in M \,|\,  A \cap \Nec \neq \emptyset \} \]
be the set of minimal network states from~$M$ that become invalid when any entity in $A$ is perturbed.
Let $R(A,M) := M \setminus \bar{M}_A$ be the remaining set of minimal network states.
Let $Q(A,M) := \{q\in\propositions \,|\, M_q \cap R(A,M) = \emptyset \}$ be the set of entities for which no minimal network state is left.

We recursively define a map $\allperturbed$ that maps a set of perturbed entities and a set of minimal network states to the set of affected entities.
Let $\allperturbed: 2^{\propositions} \times 2^{\mns} \rightarrow 2^{\propositions}$ be defined by
\[ \allperturbed(A, M) := \begin{cases}
       \emptyset &\text{if } A = \emptyset,\\
       A \,\cup\, \allperturbed\left(Q(A,M), R(A,M) \right) &\text{otherwise}.
\end{cases}
\]%
\end{definition}

Let $\hypernetwork$ be a protein hypernetwork with perturbations $P_\downarrow \subseteq P$ and $I_\downarrow \subseteq I$ and minimal network states $\mns$.
Then
\[ Q_{\perturbed} := \allperturbed(P_\perturbed \cup I_\perturbed,\, \mns) \]
is the set all affected proteins and interactions.

This provides a module that enables any algorithm that makes predictions based on protein networks to be applied also on a perturbed network, considering the dependencies between interactions.


\section{Result I: Hypernetworks Improve Prediction of Protein Complexes}
\label{sec:complexes}

\subsection{Rationale}
When considering only the interactions between proteins, but not their dependencies, the prediction of protein complexes often relies on identifying dense regions in protein networks~\citep{spirin2003, bader2003,li2005}.
Indeed, algorithms for the prediction of complexes based on plain protein networks $(\proteins,\interactions)$ like MCODE \citep{bader2003} and LCMA \citep{li2005} have been shown to provide reasonable results by detecting such dense regions. 
However, many of the complexes predicted in this way are false positives, since interaction dependencies do not allow their assembly.
Along this line, it was recently shown that consideration of mutual exclusiveness between interactions improve the quality of protein complex prediction \citep{jung2010}.

Here, we first provide a general framework that can build on an arbitrary network-based complex prediction method and ensures that the predicted complexes do not violate arbitrary given interaction constraints within the hypernetwork.
Thus, the framework is much more general than the work by \cite{jung2010}; in practice, however, the main problem is obtaining sufficiently many constraints (see Discussion).
We demonstrate that our framework improves complex prediction on the yeast network in conjunction with the established constraints using the LCM algorithm \citep{li2005} as an example network-based complex prediction method.

\subsection{Algorithm}
\label{sec:complexes:algorithm}

The prediction of protein complexes in hypernetworks consists of four steps, illustrated in Fig.~\ref{fig:complexes}b.
First, for each protein and interaction $q \in P \cup I$, the set of minimal network states $M_q$ is obtained.
Then, with a network based complex prediction algorithm, an initial set of protein complexes is predicted.
Each complex~$c$ is given as a subnetwork $(P_c, I_c)$.

The third step is more complicated.
Let $M_c := \bigcup_{q \in P_c \cup I_c} M_q$ be the set of minimal network states of the complex's entities.
We want to combine the individual states without introducing clashes, as formalized by the following definition.

\begin{definition}[Maximal combination of minimal network states]\label{def:maxcomb}
For a complex $c$, a set~$M \subseteq M_c$ is called a \emph{maximal combination of minimal network states} iff
(1) there exists no clashing pair of minimal network states in~$M$, and
(2) the inclusion of any further minimal network state from $M_c$ would result in a clashing pair.
\end{definition}

All maximal combinations of minimal network states for a given complex~$c$ can be obtained by recursively building a tree of minimal network states to be removed from $M_c$.
The root of the tree is annotated with $M_c$; each other node is annotated with a remaining set $M$.
If $M$ does not contain any pair of clashing states, the node is a leaf, and $M$ is added to the result set of maximal combinations.
Otherwise, we take any $m$ with clashing $m', m'', \dots$ and branch off two children which remove $m$ on the one hand, and remove $m', m'', \dots$ on the other hand.
The tree is explored in a depth-first manner, checking for redundancies in each node.
Let $\maxcombs \subseteq 2^{M_c}$ be the set of all found maximal combinations of minimal network states.
Its cardinality equals the number of non-redundant leaves in the removal tree.
For each maximal combination $M \in \maxcombs$, we generate the corresponding subnetwork of $(P,I)$.

\begin{definition}[Simultaneous Protein Subnetwork]\label{definition:simultaneous_subnetwork}
Let $M \in \maxcombs$ be a maximal combination of minimal network states for complex~$c$.
Let $\proteins_M$ be the set of all necessary proteins and $\interactions_M$ the set of all necessary interactions in $M$, i.e., $\proteins_M := \proteins \cap \bigcup_{(\Nec,\Imp)\in M}\, \Nec$ and $\interactions_M := \interactions \cap \bigcup_{(\Nec,\Imp)\in M}\, \Nec$.
Then $(P_M,I_M)$ is called a \emph{simultaneous protein subnetwork}.
\end{definition}
All proteins and interactions in $(P_M, I_M)$ may exist simultane\-ously in the context of the protein hypernetwork $(P,I,C)$ because the minimal network states in $M$ do not clash with each other.
In comparison to the subnetwork for the network based predicted complex $(P_c, I_c)$, each subnetwork $(P_M, I_M)$ may have lost and gained several interactions or proteins.

Finally, in the fourth step, we perform a network based complex prediction on each simultaneous protein subnetwork $(P_M,P_M)$ again with the same algorithm as during the initial step (as proposed by \citet{jung2010}; thereby it has to be ensured that the network based complex prediction does not produce biased results when performed only on subnetworks).
The proteins and interactions in the new complexes are simultaneously possible.
However, the prediction may miss necessary interactions and proteins outside the initially predicted complex.
Therefore, we force these omitted entities to be contained in the corresponding predicted complex.

\subsection{Experiments}
\label{complex_prediction:experiments}

\begin{table}\centering%
\caption{\label{figure:complexpred}
Application of interaction constraints improves the quality of protein complex prediction.
The procedure described in Sec.~\ref{sec:complexes:algorithm} was performed 
without constraints, with the 458 constraints reported in \citet{jung2010}, and with 100 independent samples of 458 randomly generated constraints (cf.\ Supplement Sec.~\ref{supp:sec:random_constraints}).
(A) Results for the benchmark set of the 55 connected complexes out of the 267 annotated MIPS complexes.
(B) Results for the benchmark set of the 62 connected complexes out of all 1142 MIPS complexes.
}
\begin{tabular}{rll}
\toprule
\multicolumn{1}{@{}l}{\bf(A)}annotated MIPS complexes & precision & recall\\
\midrule
plain (no constraints) & 0.142 & 0.792 \\
458 constraints & 0.206 & 0.792 \\
458 random constr.; mean$\pm$SD & 0.149$\pm$0.005 & 0.782$\pm$0.02\\
\midrule
\multicolumn{1}{@{}l}{\bf(B)}all MIPS complexes& precision & recall\\
\midrule
plain (no constraints) & 0.15 & 0.76 \\
458 constraints & 0.21 & 0.76 \\
\bottomrule
\end{tabular}
\end{table}

To evaluate the refined complex prediction, we use the \emph{Com\-pre\-hen\-sive Yeast Genome Database}  (CYGD; \citet{cygd}) description of the yeast \organism{S.~cerevisiae} interactome (last revision 01-10-2008, 4579 proteins connected by 12567 interactions), which is often being used to benchmark protein complex prediction algorithms \citep{bader2003,li2005,amin2006,jung2010,feng2010}.
This choice ensures consistency with the used collection of interaction dependencies \citep{jung2010}, that was defined on top of CYGD.
Additionally, CYGD contains a gold standard for complex prediction known as the MIPS dataset (1142 known complexes, last revision 18-05-2006), in the following referred to as \emph{CYGD complexes}.
From these, 267 complexes are annotated with their biological function and considered to be reliable \citep{li2005,jung2010,feng2010}.

We exemplarily use the local clique merging algorithm (LCMA, \citet{li2005}, see Supple\-ment Sec.~\ref{supp:sec:lcma}) as the underlying network-based protein complex prediction tool.
Since LCMA cannot predict complexes containing less than three proteins or complexes that are not connected in the underlying protein network, we restrict the benchmark set of the 267~reliable complexes to the 55~connected ones of at least three proteins, and the benchmark set of all 1142~MIPS complexes similarly to 62~complexes.

We created a protein hypernetwork from the yeast protein network by incorporating as constraints 458 pairs of mutually exclusive interactions reported by \citet{jung2010}.
Together, these mutually exclusive interactions constrain 329 interactions, which are 2.7\% of all interactions (we refer to these as the 458 constraints from now, although each mutually exclusive pair of interactions $i,j$ is modelled in fact by two constraints $i \Rightarrow \lnot j$ and $i \Rightarrow \lnot i$).

Our refined predictions are compared to the known CYGD complexes:
Following literature conventions \citep{bader2003,li2005,amin2006,jung2010} to ensure comparability, we consider a predicted complex $c \in \refcomplexes$ to match a CYGD complex $c' \in \cygdcomplexes$ iff
$\sqrt{ (|c \cap c'|^2) / (|c| \cdot |c'|) } \geq \sqrt{0.2} \approx 0.45$.

Let $B:=\cygdcomplexes$ be the benchmark set of CYGD complexes and $P:=\refcomplexes$ be the set of predicted complexes.
By $FP \subseteq P$ we denote the set of false positives, that is predictions that were not found in the benchmark.
By $FN \subseteq B$ we denote the set of false negatives, that is the complexes in the benchmark that were not predicted.
The \emph{recall} and \emph{precision} of a prediction are defined as:
\[ \textit{recall} := \frac{|B| - |FN|}{|B|}
   \quad\text{and}\quad
   \textit{precision} := \frac{|P| - |FP|}{|P|},
\]

Table \ref{figure:complexpred} shows that constraining only 2.7\% of all known interactions already increases the precision while the recall remains constant.
In contrast, applying 458 randomly generated constraints modelling mutually exclusive interactions (see Supplement Sec.~\ref{supp:sec:random_constraints} for details) does not provide an improvement and even reduces the recall. 
Sec.~\ref{supp:sec:gradualcomplex} of the Supplement furthermore shows the development of recall and precision when the true constraints are gradually introduced in a random order.

The prediction of protein complexes in hypernetworks uses network-based complex prediction as an autonomous module.
Therefore, the choice of LCMA as the algorithm for this module here is arbitrary for demonstrating the benefit of using protein hypernetworks.
Any LCMA-analogous algorithm can be plugged similarly into the hypernetwork approach to ensure that the complexes it predicts do not violate constraints, thereby improving the prediction quality.
Indeed, \citet{jung2010} show that this is also true for the MCODE complex prediction algorithm \citep{bader2003}.
Note that the hypernetwork formalism allows to further harness such algorithms to predict changes in protein complexes upon perturbations, as we described in Sec.~\ref{sec:perturbationeffects}.


\begin{figure}
\centering
\includegraphics[width=0.25\textwidth]{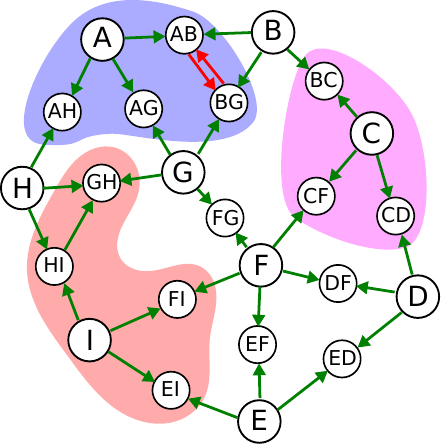}
\caption[Minimal network state graph and PIS]{Minimal network state graph and PIS.
BFS from node $A$ and $I$ results in a higher PIS than from node $C$, because of the competition between interactions $AB$ and $BG$ and the dependency of $GH$ on $HI$, respectively.
For the underlying hypernetwork see Figure \ref{fig:complexes}.}
\label{figure:Gmns}
\end{figure}

\begin{figure*}\centering
\includegraphics[width=0.8\textwidth]{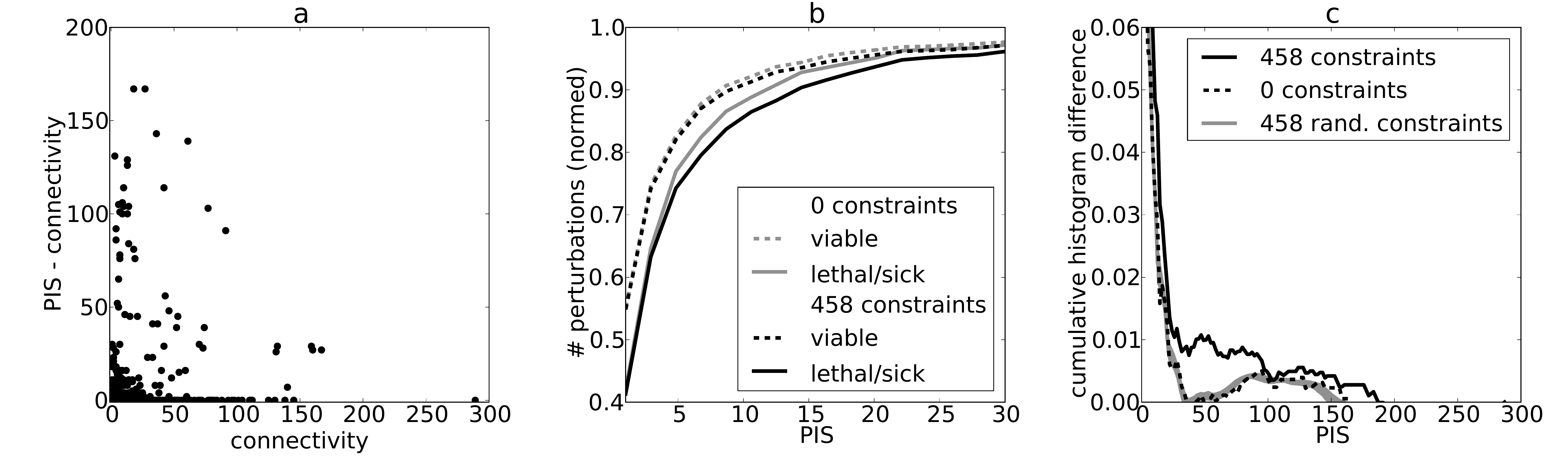}%
\vspace*{-3ex}
\caption{
\label{figure:pis}
Performance of $\piscon$ and $\pisuncon$ (connectivity) in predicting viability of perturbations.
(a) Scatterplot of connectivity against difference PIS minus connectivity; note that always PIS $\geq$ connectivity.
(b) Cumulative distribution function (cdf) of $\piscon$ and $\pisuncon$ for viable and lethal/sick perturbations (axes span region of distinguishable values).
(c) Differences between the cdfs (for details see Supplement, Sec.~\ref{supp:sec:cdf_analysis}) of viable and lethal/sick perturbations, for 0 constraints, 458 constraints, and 458 random constraints. 
For the latter, 100 samples of 458 random constraints were drawn (see Supplement Sec.~\ref{supp:sec:random_constraints}), and the figure shows the area between the mean plus minus one standard deviation.
Note that the $\piscon$ curve is always above the $\pisuncon$ curve, and that the $\pisuncon$ curve agrees well with the randomized curve, indicating the contribution of constraints to a better discrimination between viable and lethal/sick perturbations.}
\end{figure*}


\section{Result II: Hypernetworks Improve Prediction of Protein Functional Importance}
\label{sec:functionalimportance}

\subsection{Perturbation Impact Score}
\label{sec:impact}
Given a plain protein network, the functional importance of each protein is often estimated based on its number of interactions or \df{connectivity} \citep{jeong2001}.
Hypernetworks with their constraints additionally allow to take dependencies between the interactions into account.
We thus propose the \df{perturbation impact score} (PIS) that indicates the amount of changes a perturbation induces in the possible states of a protein network.
First, we define the minimal network state graph that shows the influence of a perturbation (Fig.~ \ref{figure:Gmns}).

\begin{definition}[Minimal network state graph]
Let $\hypernetwork$ be a protein hypernetwork.
Consider a graph $G_{\MNS} := (\propositions,E)$ with directed edges $E$ defined as follows.
For each protein or interaction $q \in \propositions$, consider each minimal network state $(\Nec, \Imp) \in M_q$ and each entity $q' \in \Nec \cup \Imp$; then $E$ consists of all such edges $(q', q)$.
The directed graph $G_{\MNS}$ is called \df{minimal network state graph}.
\end{definition}
An edge $(q', q)$ in $G_{\MNS}$ represents that the perturbation of $q'$ affects the possible configurations of the network around $q$.
On the one hand, if $q'$ is necessary for $q$, then $q$ will become impossible once $q'$ is perturbed.
On the other hand, if $q'$ is mutually exclusive with $q$, the disappearance of $q'$ will also have an effect on the configuration of the network around $q$. 

Now the PIS can be defined for a set of perturbed proteins or interactions.
\begin{definition}[Perturbation Impact Score]
Let $\hypernetwork$ be a protein hypernetwork.
Let $Q_\perturbed\subseteq \propositions$ be the set of perturbed proteins or interactions.
Let $R_\perturbed$ be the set of nodes reachable from $Q_\perturbed$ in the minimal network state graph.
Define a distance function $dist_{Q_\perturbed}: R_\perturbed \rightarrow \mathbb{N}$ such that $dist_{Q_\perturbed}(q)$ is the shortest path length  $G_{\MNS}$ between $q$ and any node in $Q_\perturbed$ (this is well-defined because consideration is restricted to reachable nodes $q$).
The \emph{perturbation impact score} of the set~$Q_\perturbed$ is defined as
\[  PIS_{\hypernetwork}(Q_\perturbed) := \sum_{q \in R_\perturbed}\, dist_{Q_\perturbed}(q).  \]
\end{definition}

The PIS models the idea that a protein or interaction is likely to be more important the further its perturbation propagates through the network.
Of course, $dist_{Q_\perturbed}(q)$ can be computed by a standard breadth-first search (BFS) on $G_{\MNS}$ beginning with the nodes in $Q_\perturbed$.

When computing the perturbation impact score $PIS_{\hypernetwork}(\{p\})$ of a single protein $p$ that does neither appear in a constraint itself nor has a neighbor that does, its score is equal to its connectivity.
Since connectivity was shown to correlate with the functional importance of proteins \citep{jeong2001}, the incorporation of constraints in the PIS can be expected to further enhance this prediction quality. 
As will be shown, PIS allows also to measure the impact of combination of perturbations, thereby to infer functional relations between proteins such as synthetic lethality.

\subsection{Experiments}
\label{sec:pis:experiments}
We evaluated PIS against plain connectivity as estimators for the functional importance of proteins.
Recalling that PIS without constraints equals connectivity, we let $\pisuncon := PIS_{(P,I, \emptyset)}$ be the connectivity, and $\piscon := PIS_{\hypernetwork}$ the score for the constrained CYGD-based yeast protein hypernetwork $\hypernetwork$ as in Sec.~\ref{complex_prediction:experiments}. 
By definition, $\piscon$ is always equal to or higher than connectivity (Fig.~\ref{figure:pis}a).

\begin{figure}\centering
\includegraphics[width=0.7\columnwidth]{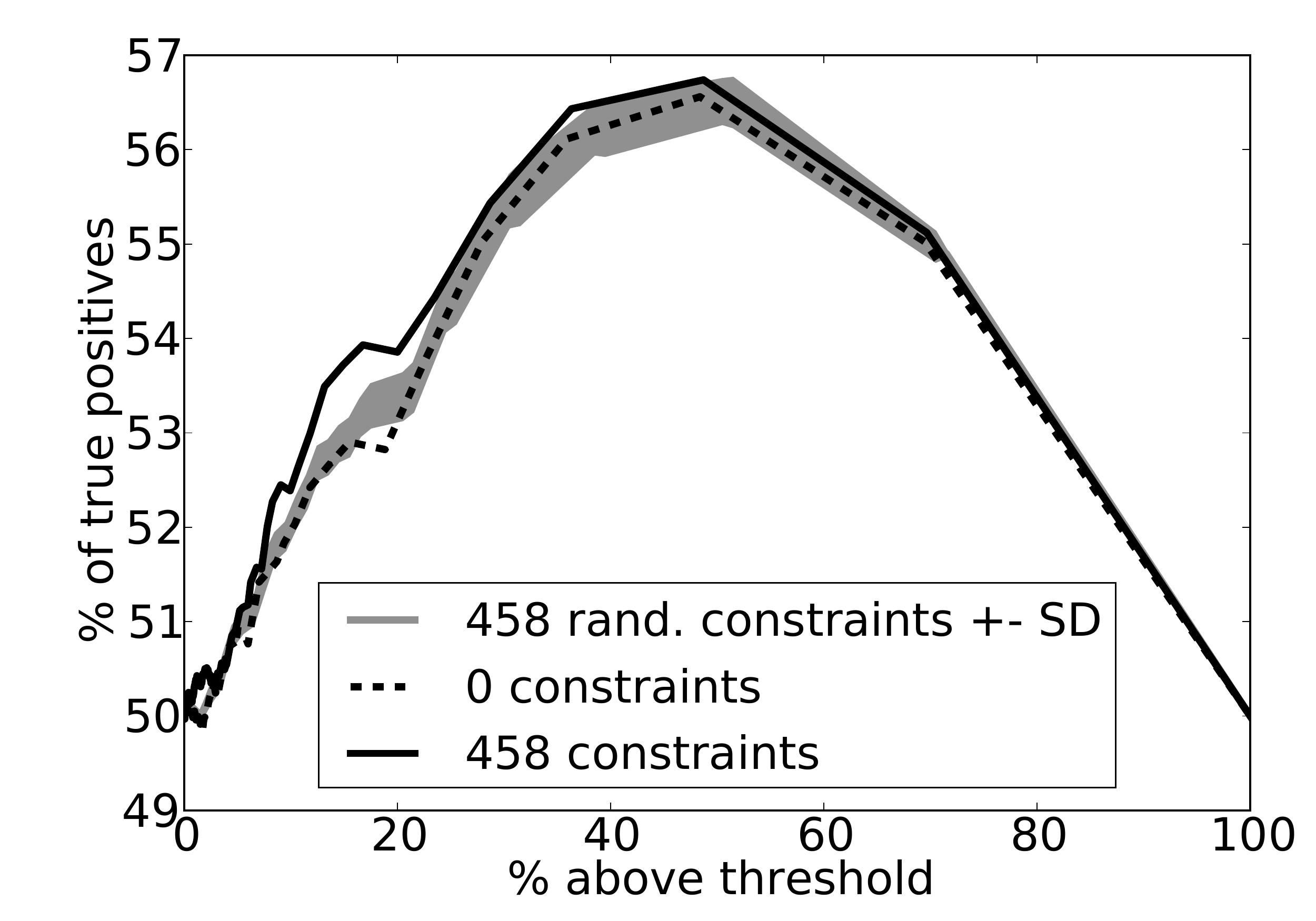}
\caption[Prediction quality depending on threshold]{
\label{figure:threshold_analysis}
Quality of the prediction of lethal/sick and viable perturbations as a function of the threshold used for the classification, for $\pisuncon$, $\piscon$ and 1000 samples of 458 random constraints (see Supplement Sec.~\ref{supp:sec:random_constraints}). 
x-axis: threshold given as percentage of perturbations above a certain PIS or connectivity value;
y-axis: combined ratio of true positives for lethal/sick and viable perturbations
$(TP_\text{lethal}/P_\text{lethal} + TP_\text{viable}/P_\text{viable})/2$.}
\end{figure}

We assume that perturbation of functionally important proteins is more likely to produce sickness or cell death.
Accordingly, for benchmarking, we classified perturbations as \emph{lethal/sick} and \emph{viable} according to the Saccharomyces Genome Database of null mutant phenotypes (1-4-2011, \citet{sgd}; see Supplement, Sec.~\ref{supp:sec:sgd}, for details).
From the distribution of PIS for both of these classes of perturbations (Fig.~\ref{figure:pis}b) it is apparent that proteins resulting lethal/sick null mutants tend to have a higher PIS, regardless of the consideration of constraints. 

Therefore, we investigated more closely the increase in PIS caused by the interaction constraints.
While $7\%$ of the lethal or sick perturbations exhibit an increased PIS upon consideration of constraints, only 2\% of the viable ones do.  
To measure the separation between the classes we subtract the cdf of lethal/sick from that for viable. 
The higher this difference, the better the separation (for details see Supplement, Sec.~\ref{supp:sec:cdf_analysis}).
Fig.~\ref{figure:pis}c shows this measure for 0, 458 and as a mean for 100 samples of 458 random constraints. The application of 458 random constraints does not alter the difference compared to 0 constraints.
In contrast, the application of the true 458 constraints increases the difference -- most obvious for a PIS between 20 and 100 -- and hence improves the discrimination between lethal/sick and viable perturbations.

In general, PIS and plain connectivity agree for most proteins (overlaid points along the horizontal axis in Fig~\ref{figure:pis}a), because only a few constraints are available so far. However, for several proteins the difference is striking.
For example the yeast protein SME1, which is required for mRNA splicing and whose perturbation is lethal to the cell \citep{cygd}, has only $6$ binding partners in the CYGD -- a relatively low connectivity that is not correlated with its biological importance.
With the application of constraints, the PIS of SME1 increases to $111$ and therefore correctly suggests that perturbation of this protein would be lethal.
Counterexamples, where the introduction of constraints wrongly increases the PIS of a viable protein, also exist, however they are a minority, as indicatted by the fact that constraints improve the overall performance of PIS (e.g. Fig.~\ref{figure:pis}c).

To illustrate the use of PIS to predict the functional importance of a protein, we predicted lethal/sick and viable perturbations by systematically applying a threshold $t$ to the PIS. 
If a protein had a PIS of at least $t$ we predicted its perturbation to be lethal/sick, while we predicted it to be viable for a PIS less than $t$.
Fig.~\ref{figure:threshold_analysis} shows the prediction quality for different thresholds, when using $0$ or 458 constraints or 1000 samples of 458 random constraints. 
To ensure the comparability of thresholds $t$, they are expressed as the percentage of proteins reaching or exceeding the PIS or connectivity threshold (e.g., $t=50$ means a certain value of PIS or connectivity such that half of the proteins reach or exceed it).
It can be seen that every non-trivial threshold performs better than the trivial ones that predict no ($t=0$) or all ($t=100$) perturbations to be lethal/sick. 
Further, the application of the true constraints provides an improved prediction (especially below $t=20$) in comparison to random constraints (Fig.~\ref{figure:threshold_analysis}).

\begin{figure}
\centering
\includegraphics[width=\columnwidth]{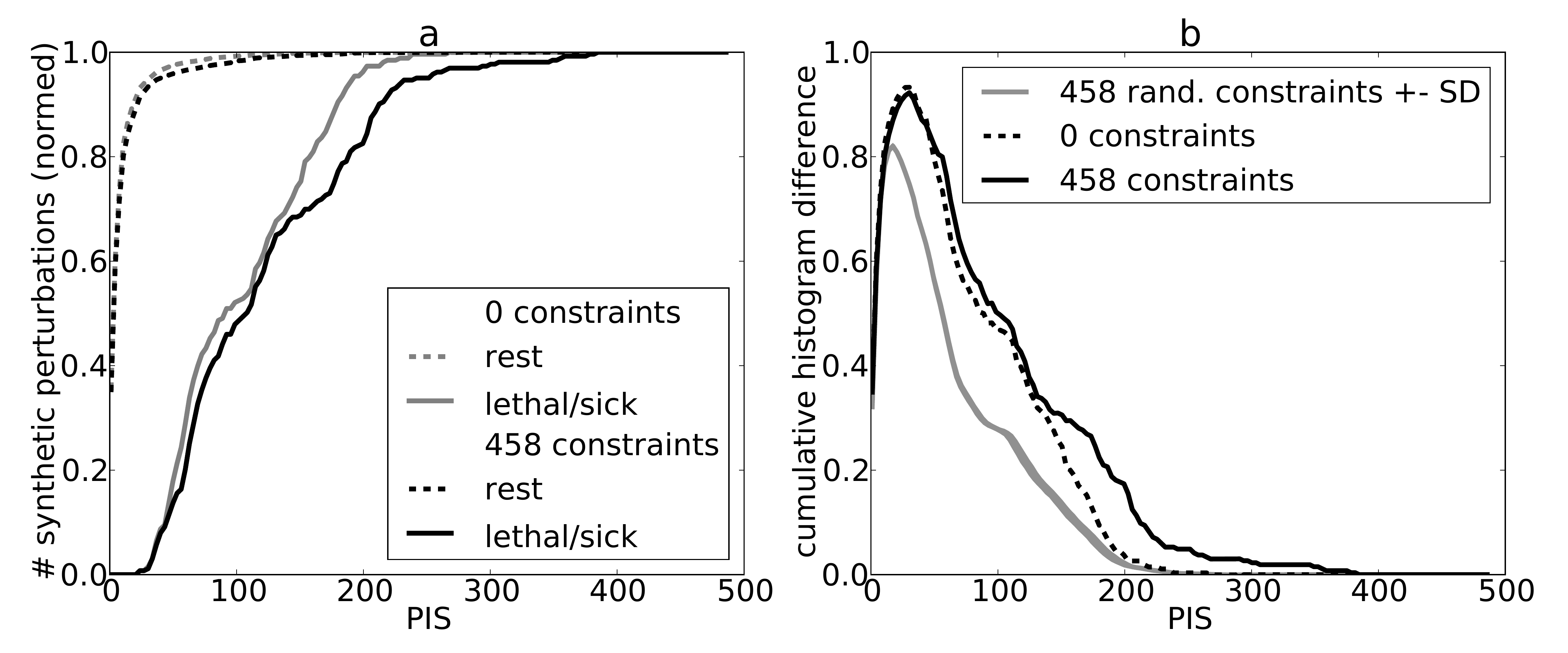}%
\vspace*{-2ex}
\caption[PIS is an indicator for lethal/sick synthetic perturbations]{PIS is an indicator for lethal/sick synthetic perturbations. 
(a) cumulative distributions functions (cdfs) of PIS for synthetic perturbations with 0 and 458 constraints, distinguished between lethal/sick (as defined by \citet{tong2004}) and the rest of synthetic perturbations. 
(b) Differences of cdfs (cf.\ Supplement, Sec.~\ref{supp:sec:cdf_analysis}) of viable and lethal/sick perturbations, for 0 and 458 constraints and 100 samples of 458 random constraints (cf.\ Supplement, Sec.~\ref{supp:sec:random_constraints}). 
Note that for $PIS > 50$ the $\piscon$ curve lies above the $\pisuncon$ curve, indicating the contribution of constraints to a better discrimination between viable and lethal/sick synthetic perturbations.}
\label{figure:synthetic}
\end{figure}

Many null mutations do not affect viability when occurring alone, but become lethal when occurring together with another specific null mutation (i.e. synthetic lethality), indicating functional buffering and relation between the corresponding proteins \citep{tong2001,tong2004}. 
To evaluate if PIS can capture these pair-wise protein relations, we investigated its ability to predict synthetic perturbations by calculating $PIS_{\hypernetwork}(\{p,p'\})$ for every pair of proteins $p,p' \in P$. Note that without constraints, PIS here equals counting the union of neighbors of two proteins.
Fig. \ref{figure:synthetic} shows that PIS provides a striking separation between lethal/sick and viable perturbations (as defined by \citet{tong2004}).
Further, the application of constraints induces a shift of the scores toward higher values, that results in an improved discrimination between viable and lethal/sick synthetic perturbations (Fig.~\ref{figure:synthetic}b). In contrast, using 100 samples of 458 random constraints again does not provide an improvement, and even decreases the discrimination quality.

We conclude that PIS is an improvement over connectivity as a predictive measure for functional importance, as it allows to integrate interaction constraints from hypernetworks.
Since only 2.7\% of interactions are constrained in our experiments, improvements by constraints are naturally small here. We expect the capability of PIS to discriminate between lethal/sick and viable perturbations to further increase as information about additional interaction constraints will become available.


\section{Discussion}
\label{sec:discussion}
The dependencies of protein interactions encode the capability of PPI systems to process information and execute cellular decisions. 
We developed an approach to unfold this dimension of information by incorporating interaction constraints generated by allosteric regulations and competative binding.
On the level of individual proteins, competition between interactions on the same binding domain leads to their complete mutual exclusiveness. 
Similarly, allosteric regulations typically generate all-or-none switches between a non-binding and a binding state \citep{Laskowski:2009fk}. 
Therefore, propositional logic can capture perfectly these fundamental processes, and additionally facilitates their algorithmic integration.
Similarly, protein hypernetworks can incorporate regulations of protein interactions by post-translational modifications (e.g. phosphorylation), as these are often on/off switches describable by propositional logic (e.g. $\{A,B\} \Rightarrow \{B,PO_4\}$ states that B has to be phosphorylated to allow its interaction with A).
The temporal expression and spatial distribution of intracellular proteins, which were shown to be valuable dimensions of information \citep{Han:2004fk,Walther:2010uq}, can also be incorporated into the hypernetwork framework by discretizing them in time (e.g. cell-cycle phases or developmental stages) and space (e.g.\ by sub-cellular compartment or by tissue).

The question addressed in this work is how to use information about interaction dependencies, rather than how to collect it. 
Nevertheless, it should be noted that a significant amount of information about interaction dependencies can already be obtained through curation from literature. Along this line, we are currently developing a text-mining tool to assist the identification of publications reporting interaction dependencies. 
As for future publications, since automatic curation of protein interactions is facilitated by a structured text format \citep{Leitner:2010fk,Ceol:2008fk}, our work motivates its usage to report interaction dependencies (see Supplementary Sec.~\ref{supp:section:representation}). 
Mutual exclusiveness between protein interactions can also be inferred from protein-domain-annotated interactome databases \citep{Ooi:2010fk} or in-silico docking modeling \citep{Wass:2011fk,Mosca:2009fk}. 
Finally, high-throughput quantification of protein interactions at domain resolution and methods for monitoring high-order interactions \citep{Jain:2011fk,Hruby:2011fk,Heinze:2004uq} would provide comprehensive identification of interaction dependencies.
		
Here, we illustrated that even constraining less than 3\% of the interactions in the CYGD is sufficient to improve complex prediction, consistently with previous results \citep{jung2010}.
It is expected that the actual fraction of constrained interactions is much higher, to allow a dynamic and functional yeast interactome.
These additional interaction constraints would be due to allosteric regulations \citep{Laskowski:2009fk}, generation and elimination of binding sites upon protein phosphorylation and dephosphorylation \citep{Seet:2006fk} and more cases of mutually exclusive interactions. 

We proposed a perturbation impact score that provides a measure for a protein's importance within a hypernetwork.
We have shown that this measure provides improvements to the prediction of functionally important proteins compared to the investigation of plain connectivity due to the usage of interaction dependencies as constraints.
As more constraints get reported, the measure should help to rationally design perturbation experiments for network analysis \citep{Zamir:2008fk} and provide mechanistic insights into large PPI systems.

Our data and software for protein hypernetworks are available;
please refer to the Supplement (Sec.~\ref{supp:sec:implementation}) for implementation details.

\bibliographystyle{natbib}
\bibliography{document}

\setcounter{section}{0}
\setcounter{figure}{0}
\setcounter{table}{0}
\setcounter{theorem}{0}

\title{Protein Hypernetworks: a Logic Framework for Interaction Dependencies and Perturbation Effects in Protein Networks (Supplementary Data)}
\author{}
\date{}
\maketitle
\makeatletter \renewcommand{\thefigure}{S\@arabic\c@figure} \renewcommand{\thetable}{S\@arabic\c@table} \renewcommand{\thesection}{S\@arabic\c@section} \renewcommand{\thetheorem}{S\@arabic\c@theorem} \makeatother 

This supplement contains background reference material on the Tableau Algorithm (Sec.~\ref{sec:tableau}), 
additional material on protein complex prediction with hypernetworks (Sec.~\ref{sec:complexpred}), 
details on the generation of random constraints for null models (Sec.~\ref{sec:random_constraints}),
supplementary material on the prediction of protein functional importance with the PIS defined in the main aricle (Sec.~\ref{sec:pis}).
We also provide software implementation details (Sec.~\ref{sec:implementation}), including resource consumption and details on the representation of constraints.


\section{Background: Tableau Algorithm}
\label{sec:tableau}
A suitable method for finding satisfying models is the tableau calculus for propositional logic \citep{smullyan1995}:
For an input formula $\phi$, it generates a deductive tree (the \df{tableau}) of assumptions about $\phi$.
Each assumption $a$ in the tree can be made due to an assumption $a'$ in an ancestral node.
We say that $a'$ \df{results in} $a$, and the generation of $a$ out of $a'$ is called \df{expansion} of $a'$.
The propositional logic tableau algorithm generates satisfying models $\alpha$ for $\phi$.
We write $\propmodel \Vdash \psi$ if $\propmodel$ satisfies a subformula $\psi$.
The tableau algorithm now generates assumptions of the type $\propmodel \Vdash \psi$ with $\psi$ being a subformula of the input formula. That is, a conjunction $\propmodel \Vdash \psi_1 \land \psi_2$ is expanded into $\propmodel  \Vdash \psi_1$ and $\propmodel  \Vdash \psi_2$ on the same path, and a disjunction $\propmodel \Vdash \psi_1 \lor \psi_2$ results in branching into $\propmodel  \Vdash \psi_1$ and $\propmodel  \Vdash \psi_2$ (see Fig.~\ref{fig:tableau}).

Each path from the root to a leaf represents a model $\propmodel$.
If a path does not contain any contradictory assumptions, the model satisfies the input formula.
Implementations of the tableau algorithm explore the tree in a depth-first way, and use backtracking once a contradiction occurs.

\begin{figure}[t]\centering%
    \begin{tikzpicture}
        \treestyle
        \node{$\propmodel \Vdash \lnot A \land (A \lor B)$}
        child {
            node{$\propmodel \Vdash \lnot A$}
            child{
            node{$\propmodel \Vdash A \lor B$}
            child {
                node{$\propmodel \Vdash A \lightning$}
            }
            child {
                node{$\propmodel \Vdash B \checkmark$}
            }}
        };
    \end{tikzpicture}
\caption{
Tableau for the propositional logic formula $\phi = \lnot A \land (A \lor B)$.
The path marked by $\lightning$ does not lead to a satisfying model $\propmodel$, because it contains a contradiction between the assumptions $\propmodel \Vdash \lnot A$ and $\propmodel \Vdash A$.
The path marked by $\checkmark$ is free of contradictions, hence its generated model satisfies $\phi$.}
\label{fig:tableau}
\end{figure}

Different variations of the tableau algorithm exist.
For example, one may be interested only in the decision ``does a satisfying model exist?'', or the task could be to output an (arbitrary) satisfying model (if one exists), or to list all satisfying models.
The latter is the task we face when enumerating minimal network states.
In theory, the tableau algorithm exhibits an exponential worst case complexity, as it operates by complete enumeration of all cases with backtracking.

However, elaborate backtracking strategies can significantly reduce the running time in practice.
Insights into such strategies and implementation details are provided by \citep{zhen_li2008}.
Also, faster heuristics exist, like GSAT \citep{selman1992}, but they are not adequate for the problem, as they do not guarantee a correct and complete answer.

For our purpose, the implementation has to ensure that
\begin{enumerate}
\item for each constraint $q\Rightarrow \psi$ (i.e., disjunction $\lnot q \lor \psi$), the default case $\lnot q$ is explored first, and that $\psi$ is expanded only if the constraint is necessarily active, in order to avoid artificially constrained models;
\item all satisfying models that comply with 1.\ are enumerated in the process.
\end{enumerate}

In our application, the tableau algorithm can be expected to perform acceptably, since we solve only minimal network state formulas with a fixed structure (a conjunction of constraints) and expect most constraints to be of a simple form.
In particular, we expect mostly mutual exclusive interactions, modeled by constraints of the form $i \Rightarrow \lnot j$, and scaffold dependent interactions that can be represented by a constraint of the form $i \Rightarrow j$.
To prove the performance of the tableau algorithm when all constraints are of this form, for a protein hypernetwork $(\proteins,\interactions,C)$ we now show that it will need only $\mathcal{O}(|C|)$ expansions to find a satisfying model.
Since expansions generate the deductive tree, that also limits all tableau operations like backtracking or contradiction tests to be polynomial in $|C|$.

\begin{theorem}
Let $\MNS_{(\proteins,\interactions,C)}(q)$ with $q \in\propositions$ be the minimal network state formula for a protein hypernetwork $(\proteins,\interactions,C)$.
Assume that each constraint in $C$ is of the form $c = (q_1 \Rightarrow \ell)$ with a literal $\ell \in \{q_2, \lnot q_2\}$ and $q_1, q_2\in\propositions$.
Then the tableau algorithm needs at most $\mathcal{O}(|C|)$ expansions to find a satisfying model.
\end{theorem}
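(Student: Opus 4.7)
The plan is to exploit the very constrained shape of $\MNS(q)$: under the hypothesis it is a conjunction of the literal $q$ with $|C|$ binary clauses $\lnot q_{1,i} \lor \ell_i$, i.e., an implication 2-CNF. I would show that the default-first tableau implements what amounts to unit propagation and therefore finds a satisfying model by touching each constraint a constant number of times.

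First I would account for the $\land$-expansions: decomposing the $|C|+1$ conjuncts of $\MNS(q)$ into $q$ and the individual constraints $c_i$ costs exactly $|C|$ applications of the $\land$-rule. Next I would account for the $\lor$-expansions: each constraint contributes exactly one branching, so at most $|C|$ further expansions. The bookkeeping part of the bound is $2|C|$; the substance of the proof is that backtracking does not inflate this.

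For the no-backtracking argument I would introduce the monotone forcing closure $T \subseteq \propositions$, defined as the smallest set containing $q$ and closed under: if $r \in T$ and $(r \Rightarrow q') \in C$ with $q'$ a positive literal, then $q' \in T$. Because $\MNS(q)$ is assumed satisfiable, $T$ is consistent with every constraint whose consequent is a negative literal $\lnot q_2$ (otherwise an unavoidable contradiction would be forced), and the valuation $\propmodel$ with $\propmodel(r)=1$ iff $r \in T$ satisfies all constraints. I would then show that an implementation processing constraints in an order compatible with this forcing DAG --- handling $c_i$ only once the membership of $q_{1,i}$ in $T$ is settled --- selects the active branch when $q_{1,i}\in T$ and the default branch otherwise, and that neither choice is ever contradicted by a later expansion. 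Depth-first search therefore succeeds on the first path it opens.

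The hard part will be bridging the abstract tableau formalism, which is nominally depth-first with backtracking on an arbitrary formula order, and the topological processing order required above. I would address this by observing that seeding a unit-propagation queue with $q$ and firing constraints as their antecedents' truth values become determined is an $O(|C|)$ preprocessing pass inducing a branching order consistent with $T$; alternatively one could argue directly that any early ``wrong'' default choice can be repaired with amortized $O(1)$ backtracking per constraint, using that each proposition becomes fixed at most once. Combining the $|C|$ $\land$-expansions and $|C|$ $\lor$-expansions then yields the overall $O(|C|)$ bound.
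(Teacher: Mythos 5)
There is a genuine gap, and it is exactly the step you yourself flag as ``the hard part.'' Your forcing-closure argument correctly identifies the least model of the (Horn-shaped) formula --- the paper makes the same structural observation in its remark that $\MNS(q)$ with these constraints is essentially a Horn formula --- but it only shows that a \emph{good} branching order exists, not that the tableau algorithm as specified (default case $\lnot q_1$ explored first, depth-first, arbitrary order otherwise, with backtracking) performs $\mathcal{O}(|C|)$ expansions. Your option (a) resolves this by preprocessing with a unit-propagation queue and then choosing an expansion order compatible with the forcing DAG; that proves the bound for a modified strategy, i.e.\ a different algorithm from the one the theorem is about, and in any case you do not carry the bridging argument out. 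Your option (b) asserts that wrong default choices can be repaired with amortized $\mathcal{O}(1)$ backtracking per constraint ``using that each proposition becomes fixed at most once,'' but that assertion is precisely the non-trivial content that needs proof: in a backtracking search a proposition assumed false on the default branch can in principle be flipped, and flipped back, as ancestors of the branching point are revisited, so ``fixed at most once'' does not hold for free.

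The paper closes this gap with a monotonicity lemma: assuming the formula is satisfiable, a constraint that has been activated (i.e.\ whose default disjunct $\lnot q_1$ has already produced a contradiction) can never be consistently rendered inactive again, because returning to $\lnot q_1$ reproduces a contradiction, recursively with the same argument. From this it follows that each constraint is expanded at most twice over the entire run --- once for the default case and, if that fails, once for the active case, in whichever of the three scenarios (never activated, activated immediately, activated after backtracking) --- giving $1 + 2|C| = \mathcal{O}(|C|)$ expansions independent of the expansion order. To complete your proof along your own lines you would need to prove this one-way (false-to-true only) behaviour of the default-first policy under the satisfiability assumption, rather than assume it; once you have it, your $2|C|$ bookkeeping goes through and is essentially the paper's count.
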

\begin{proof}
We show that a constraint that is active cannot be rendered inactive again when assuming that the formula is satisfiable.
Assume that an active constraint $c = (q_1 \Rightarrow \ell)$ is the cause of a conflict, hence $\ell$ contradicts some literal $\ell'$. 
Since we require above that the tableau explores the inactive case first, we know that $\lnot q_1$ caused a contradiction, too. 
We now assume that $\ell$ is removed and we expand $c$ to $\lnot q_1$ again to resolve the contradiction. 
Then, the formula is found to be not satisfiable, because $\lnot q_1$ can either contradict $q$ or another constraint, in which case the argument can be applied recursively.

Now we show that each constraint is expanded at most two times. There are three cases:
(1) The constraint is never activated; then only the inactive case is expanded and the constraint is expanded only once. 
(2) The constraint is activated immediately because $\lnot q_1$ leads to a conflict. 
This needs two expansions. 
(3) The constraint is first inactive and then activated because of a backtracking. 
This needs again two expansions.
Hence, the tableau algorithm needs to perform $1 + 2|C| = \mathcal{O}(|C|)$ expansions.
\qed
\end{proof}

Note that $\MNS_{(\proteins,\interactions,C)}(q)$ with above simple constraints is essentially a Horn formula for which it is known that calculating a satisfying model has polynomial complexity with specialized algorithms \citep{dowling1984}. 
However, proving the complexity of the general tableau algorithm for this case remains useful: 
While we expect most of our constraints to have this simple form, we cannot be sure for all of them.
Hence it is reasonable to provide a computational approach that can handle full propositional logic, but will have comparable complexity to specialized horn formula algorithms in the majority of cases.


\section{Complex Prediction in Hypernetworks}
\label{sec:complexpred}

In this section, we provide more details on protein complex prediction, supplementing Section~\ref{main:complex_prediction:experiments} of the main article.

\subsection{Background: LCMA}
\label{sec:lcma}
\begin{figure}\centering
\includegraphics[width=0.27\textwidth]{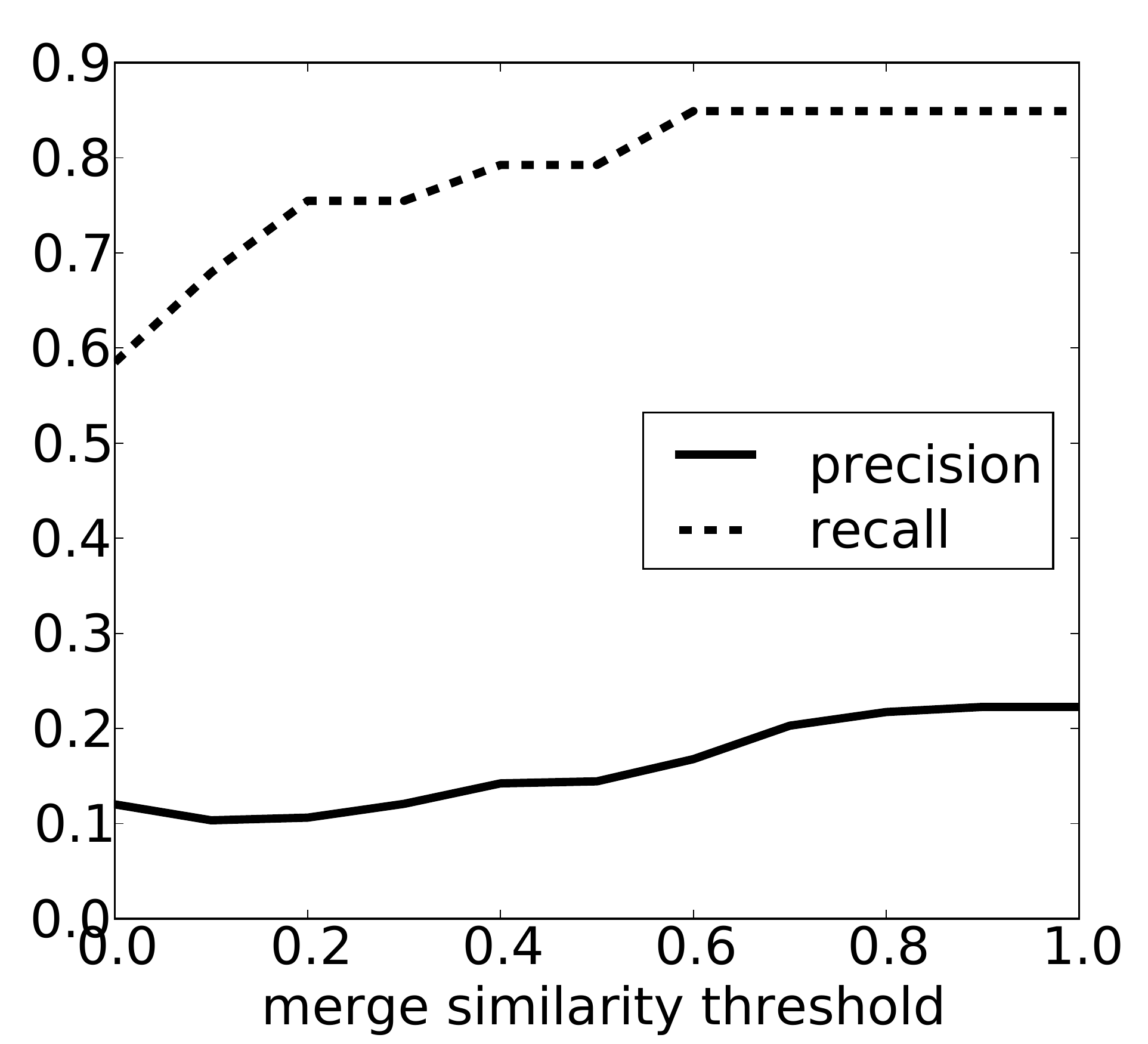}%
\caption{LCMA precision and recall for different merge similariy thresholds~$\omega$ on the CYGD protein network and complexes, as decribed in the main article.}
\label{fig:lcma_thresholds}
\end{figure}

We briefly summarize the steps of the Local Clique Merging Algorithm (LCMA; \cite{li2005}) as an exemplary complex prediction algorithm based on plain networks that can be improved by introducing protein hypernetworks.
In a first step, LCMA finds a set of local cliques; 
second, it iteratively merges those with a significant overlap (given by a \emph{merge similarity threshold}~$\omega$). 
The complex prediction consists of all merged cliques once no further merges happen or average density falls below 95\% of the previous iteration.
While the authors propose $\omega=0$, we found LCMA to perform better with higher thresholds on the plain CYGD \citep{cygd} network and complexes (Fig.~\ref{fig:lcma_thresholds}).
Now, a higher threshold $\omega$ means that less clique merging is performed.
The best choice of $\omega=1.0$ means that the clique merging step is not performed at all because only cliques that overlap by $100\%$ (i.e. that are identical), would be merged. 
Therefore, we chose $\omega = 0.4$ heuristically as a compromise between prediction quality and originally intended behaviour.

\subsection{Effects of Gradually Introducing Constraints on Complex Prediction in Hypernetworks}
\label{sec:gradualcomplex}

\begin{figure*}[t!]\centering
\includegraphics[width=\textwidth]{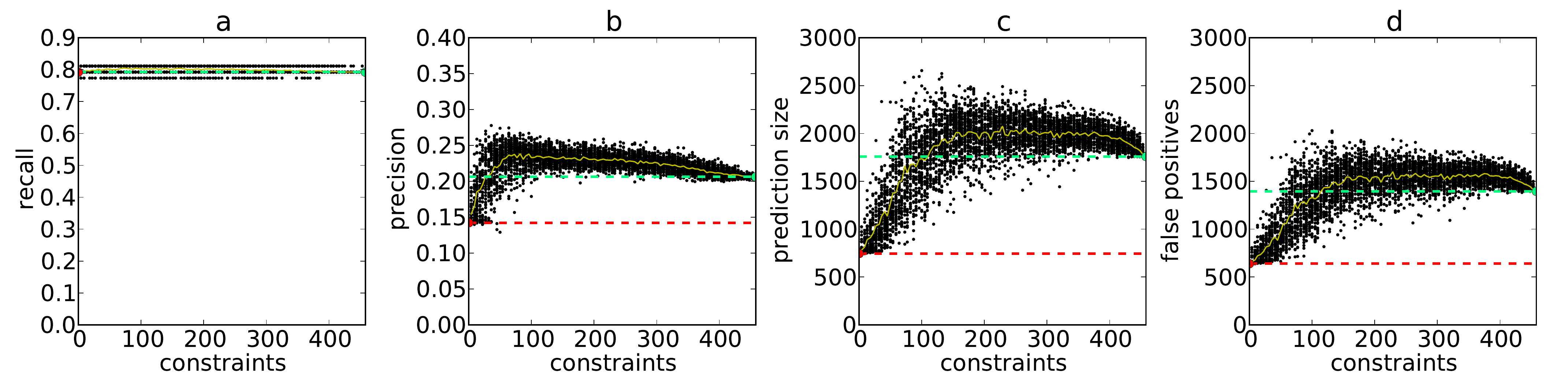}
\caption[Complex prediction quality as a function of the number of applied constraints]{
\label{complexpred:quality}
Complex prediction quality as a function of the number of applied constraints from 50 random samples for each step:
(a) recall, (b) precision, (c) total number of predicted complexes, (d) false positive predictions.
The red and green dashed lines mark the values obtained when none or all of the constraints are applied, respectively.
The yellow line indicates the mean value for each number of applied constraints.}
\end{figure*}

In the main article (Sec.~\ref{main:complex_prediction:experiments}), we showed that applying all 458 available constraints from \cite{jung2010} resulted in an improved precision, while leaving the recall constant when predicting the CYGD complexes.
Here we investigate the effect of a gradual application of constraints, in order to get an insight on their actual effects. 
Therefore, we randomly sampled subsets of all 458 available constraints of sizes between 4 ($1\%$ of 458) and 453 ($99\%$ of 458).
More precisely, for each $i\in\{1,2\dots,99\}$, we generated 50 independent samples of size $i\%$ of 458 (rounded to the nearest integer).

\paragraph{Precision and recall as a function of the number of applied constraints.}
Fig.~\ref{complexpred:quality} shows the development of precision and recall as a function of the number of applied constraints.
While the recall is independent of the number of applied constraints (Fig.~\ref{complexpred:quality}a),
high numbers of constraints ($\geq 100$) consistently provided an improvement in the precision over the unconstrained instance with precision $0.15$ (Fig.~\ref{complexpred:quality}b).
This indicates that constraining only about 1\% of the interactions is already sufficient to robustly improve complex prediction.
The maximum achieved precision then decreases gradually when applying more than 100 constraints and appears to reach a plateau upon using all available constraints.
The minimum achieved precision rarely drops below the final precision value $0.20$.

We offer the following explanation:
Note that initially both the number of predicted complexes and the number of false positive predictions increase (Figs.~\ref{complexpred:quality}c and~\ref{complexpred:quality}d), but the latter one at a slower rate.
Upon application of more interaction constraints both quantities reach a plateau and eventually decrease.
How might this come about?
A false positive complex may contain two interactions that are in reality mutually exclusive.
The corresponding constraint might not be sampled when applying few constraints, resulting in one false positive prediction.
When the number of constraints increases, the refinement step leads to two simultaneous protein subnetworks, on which again nearly the original complex without one of the exclusive interactions is predicted.
Each of the two complexes may now be closer to a true benchmark complex, but the number of constraints may still be too low to turn it into a true positive.
Thus refinement of one false positive complex might initially lead to two or more false positive smaller complexes.
This may underlie the observation that after an initial increase of the precision, showing a general beneficial effect of constraints, there is a stationary phase with even slightly decreasing precision.

Since the available constraints affect only less than 3\% of all interactions, an important challenge is to extrapolate the development of the precision as a function of much higher numbers of constraints.
It is rational to hypothesize that the precision will eventually start to increase upon constraining more interactions.
Since we cannot prove this hypothesis directly at this point, we instead mimicked the effect of adding further constraints that may destroy small false positive complexes ($< 24$ proteins) by artificially removing them from the prediction.
After an initial decrease this leads to a precision increase when applying more than 100 constraints (Fig.~\ref{figure:precision_big}).
This observation is consistent with our hypothesis that the application of further constraints should lead to an increase of precision on all complexes.

\begin{figure}[t]\centering%
\includegraphics[width=0.27\textwidth]{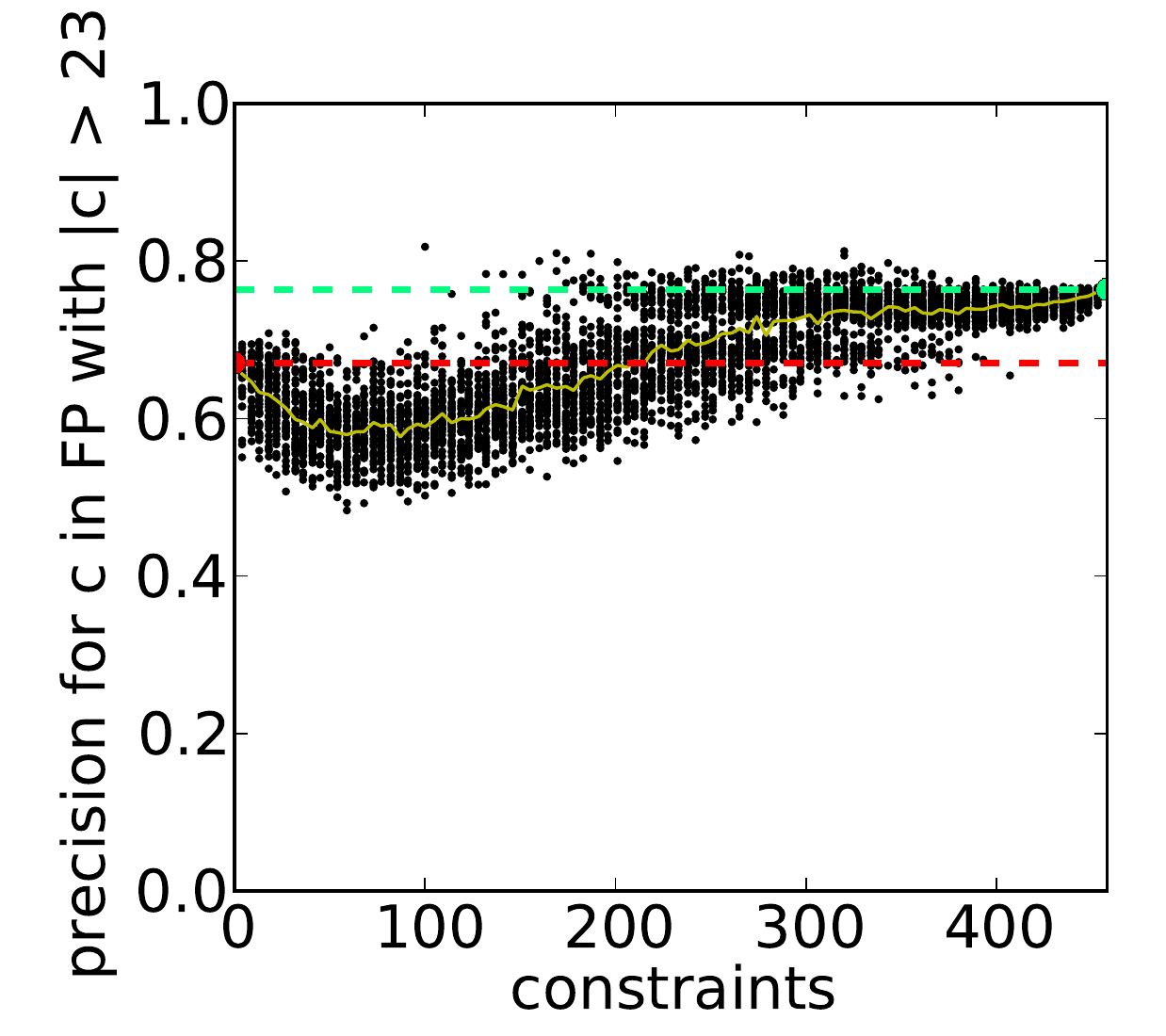}%
\vspace*{-3ex}%
\caption{Complex prediction precision when removing all false positive complexes that contain at most 23 proteins.
The red and green dashed lines mark the values obtained when none or all of the constraints are applied, respectively.
The yellow line indicates the mean value for each number of applied constraints.
}\label{figure:precision_big}%
\end{figure}

\paragraph{Accuracy on single complexes.}
\begin{figure}\centering
\includegraphics[width=0.7\columnwidth]{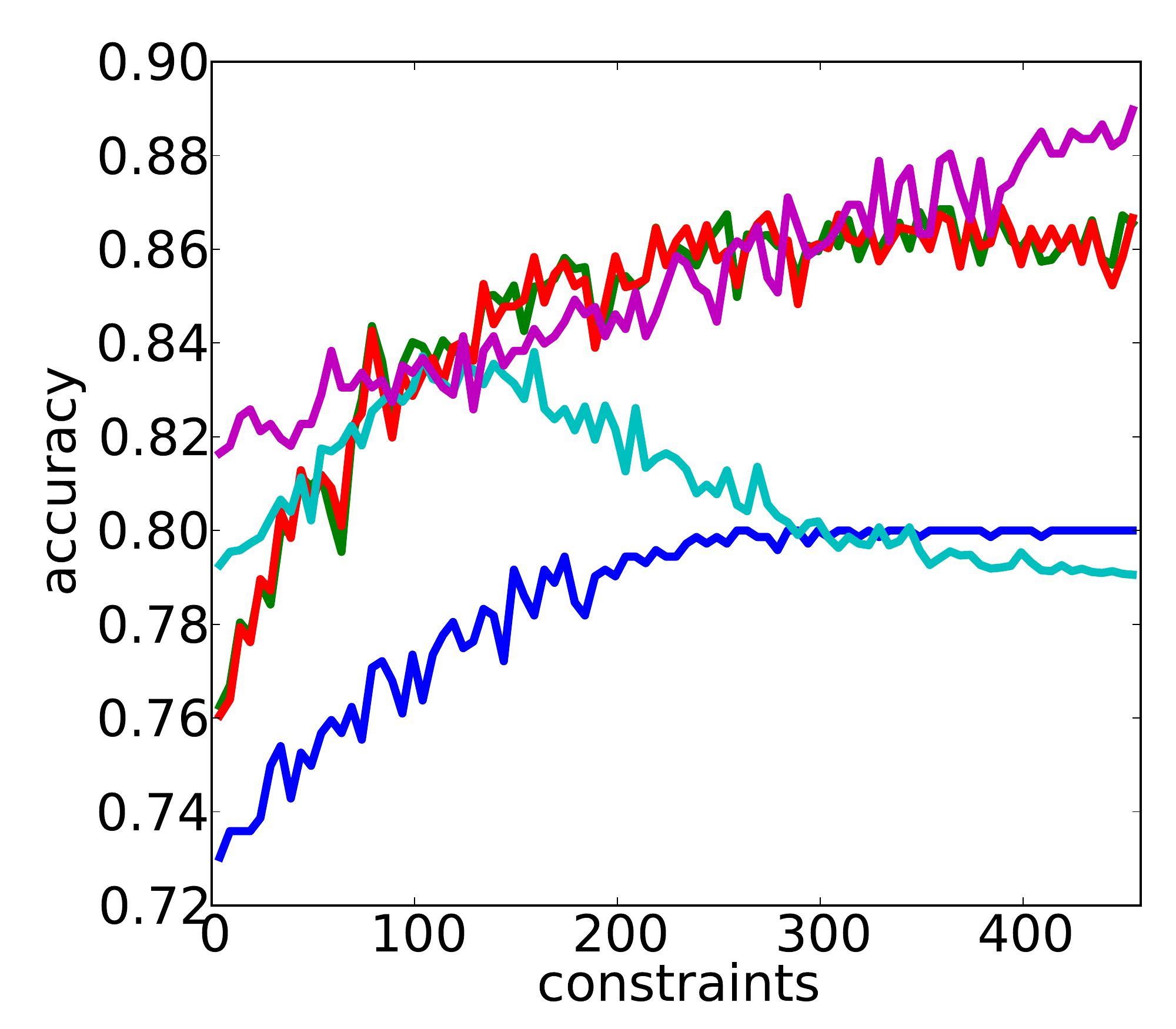}
\caption{\label{fig:single_complex_accuracy}
Matching accuracy for each CYGD complex over gradual application of constraints;
mean values over 50 independent samples. 
Complexes with constant accuracy and complex Nop58/Nop56/Nop1 are not shown.}
\end{figure}

Complementary to the precision and recall of the whole prediction, we examined single complexes as well.
As in Sec.~\ref{main:complex_prediction:experiments} of the main article,
we consider a predicted complex $c$ to match a CYGD complex $c'$ iff the matching accuracy
$\sqrt{ (|c \cap c'|^2) / (|c| \cdot |c'|) }$ exceeds the threshold of $\sqrt{0.2}$.

When monitoring the accuracy of each CYGD complex $c'$ while gradually introducing constraints,
we would expect that the accuracy with its best matching prediction $c$ remains constant or increases.
Indeed, while the accuracy remains constant for most of the 55 complexes, it increases for four complexes, but there are also two complexes whose accuracy does not follow this expectaton (Fig.~\ref{fig:single_complex_accuracy}; one of the latter ones not plotted).

The Nop58/Nop56/Nop1 complex (CYGD ID 440.12.30), one of the two complexes with decreasing accuracy, is not shown in Fig.~\ref{fig:single_complex_accuracy} because it contains one constraint (Nop56 and Nop58 are competing on the same binding domain of Nop1) so that it disappears once this constraint is applied.

The Gim3/Gim5/Gim4/PAC10/YKE2 complex (CYGD ID 177, cyan in Figure~\ref{fig:single_complex_accuracy}) accuracy first increases then decreases, approximately returning to the initial value in the end. 
We identified the following constraints to hurt its accuracy:
\begin{quote}
	\{SMC3,SMC3\}   $\Rightarrow \lnot$ \{SMC1, SMC3\} \\
	\{ARP6,SWD3\}  $\Rightarrow \lnot$ \{CLA4, SWD3\} \\
	\{SKP1,CDC53\} $\Rightarrow \lnot$ \{MET30, CDC53\}.
\end{quote}
These findings do not necessarily imply that those constraints are wrong. 
Rather they are hindering the heuristic LCMA in predicting the two complexes by altering the density of the corresponding regions in the simultaneous protein subnetwork. 
This shows that predicting complexes by density -- while it seems a good strategy in general -- does indeed fail for single cases.


\section{Generation of Random Constraints}
\label{sec:random_constraints}

It is important to compare the effects of (presumably) true known constraints with the effect of random constraints in order to show that observed effects are not simply due to applying constraints \textit{per se}.
Here, we specify how we generate random constraints of the type ``mutually exclusive interaction''.

To generate a random constraint, we randomly choose a protein $p_1 \in \proteins$ network, and randomly select two different neighbours $p_2, p_3 \in \proteins$. 
We interpret $p_1$ as the host protein and $p_2, p_3$ as two proteins competing on the same binding domain of $p_1$. 
Thereby we obtain the constraints $\{p_1, p_2\} \Rightarrow \lnot \{p_1, p_3\}$ and $\{p_1, p_3\} \Rightarrow \lnot \{p_1, p_2\}$. 

For the CYGD hypernetwork (Sec.~\ref{main:complex_prediction:experiments} of the paper), we iteratively generate 458 of these constraint pairs (we refer to such a pair simply as one constraint).
By independently repeating this process $n$ times, we gain $n$ independent samples of 458 random constraints.


\section{Protein Functional Importance Prediction with Hypernetworks}
\label{sec:pis}
In this section, we provide more details on protein complex prediction, supplementing Section~\ref{main:sec:functionalimportance} of the main article.

\subsection{Phenotypes of Null Mutants in the Saccharomyces Genome Database}
\label{sec:sgd}

We used the Saccharomyces Genome Database (SGD) \citep{sgd} to classify perturbations as lethal/sick. 
SGD collects the generated phenotypes of perturbation experiments for most proteins that are also in the CYGD, our selected benchmark. 
SGD phenotypes are provided in a standardized way, a complete list is provided by the \emph{Ontology Lookup Service} \citep{ols}.
To fit our modelling of perturbation and the notion of functional importance in the main article, we considered only ``null mutant'' perturbations rather than e.g. overexpression experiments. 
Table~\ref{table:lethalsick} shows those phenotypes that were counted as lethal/sick.
In contrast, the class of viable perturbations contains all that are annotated with the phenotype ``viable'' and are not contained in the class of lethal/sick ones.

\begin{table}\centering
\caption{SGD phenotypes selected to be classified as lethal/sick. 
A phenotype is composed by an observable and a qualifier.} 
\label{table:lethalsick}
\begin{tabular}{rl}
\toprule
observable & qualifier\\
\midrule
cell death & increased, increased rate \\
apoptosis& increased, increased rate \\
autolysis& increased, increased rate \\
cell lysis& increased, increased rate \\
necrotic cell death& increased, increased rate \\
competetive fitness & decreased \\
viability & decreased\\
vegetative growth & decreased\\
inviable & -\\
\bottomrule
\end{tabular}
\end{table}

\subsection{Analysis of cumulative PIS distributions}
\label{sec:cdf_analysis}
To find out about the capability of PIS to indicate functional important proteins, we analysed its distribution for disjoint classes like lethal/sick and viable. 
Therefore, we calculated empirical cumulative distribution functions (cdfs; normed cumulative histograms) for both classes. 
Maximum separation is provided if the ''lethal/sick`` cdf does not increase over 0 before the ''viable`` histrogram reaches 1 (Fig.~\ref{fig:maxsep}a). 
No separation means that both cdfs have the same values for each score. 
To better compare several cdf pairs, we calculate the absolute difference between the two cdfs (Fig.~\ref{fig:maxsep}b). 
The higher the absolute difference, the better is the separation. 
If the absolute difference reaches 1.0 at any point, the distributions are completely separated.

\begin{figure}[t]\centering%
\includegraphics[width=\columnwidth]{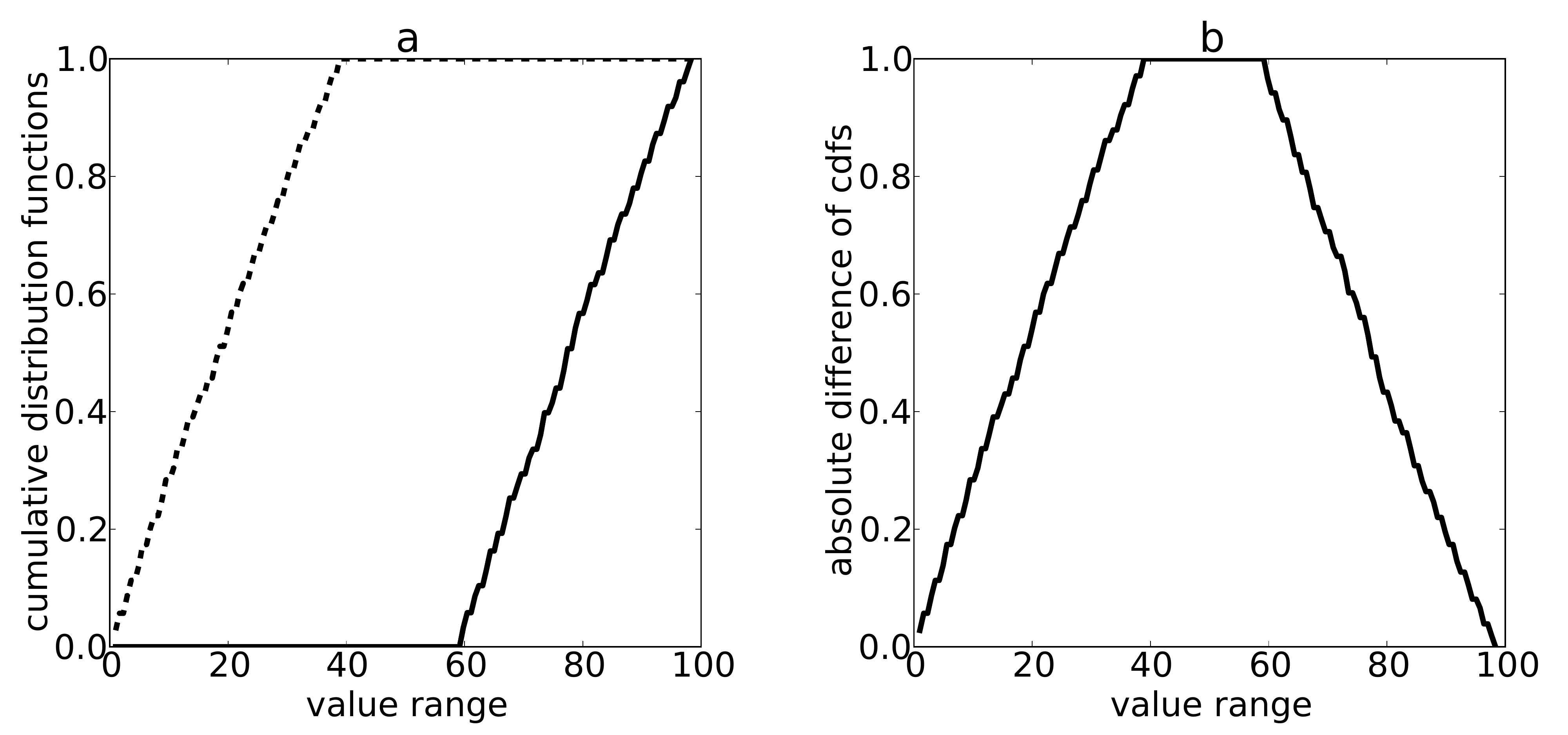}%
\caption{Example for cdf based separation analysis of two completely separated distributions: uniform distribution $U_1$ on $[0,40]$ and uniform distribution $U_2$ on $[60,100]$.
(a) cumulative distribution functions (cdfs) for $U_1$ (dashed) and $U_2$ (solid). 
x-axis: value range of the distributions. E.g. at $x=50$, the cdf of $U_1$ has reached $1$, while that of $U_2$ is still zero.
(b) absolute difference of the two cdfs. 
}\label{fig:maxsep}
\end{figure}


\begin{table*}
\centering
\caption{Representation of possible interaction constraints in structured text. Keywords and proteins are annotated with a defined ontology term id. This way, finding and parsing constraints is improved while human readability is maintained.} \label{table:structured_text}
\begin{tabular}{rp{11cm}}
\toprule
constraint & structured text\\
\midrule
mutually exclusive interactions & x (uniprotkb:x) competes (MI:0941) with y (uniprotkb:y) for interaction (MI:0407) with z (uniprotkb:z).\\
negative allosteric regulation by protein binding & interaction (MI:0407) between x (uniprotkb:x) and y (uniprotkb:y) allosterically (SBO:0000239) inhibits (SBO:0000407) the interaction (MI:0407) of y (uniprotkb:y) with z (uniprotkb:z).\\
positive allosteric regulation by protein binding & interaction (MI:0407) between x (uniprotkb:x) and y (uniprotkb:y) allosterically (SBO:0000239) activates (SBO:0000461) the interaction (MI:0407) of y (uniprotkb:y) with z (uniprotkb: z).\\
negative regulation by phosphorylation & interaction (MI:0407) between x (uniprotkb:x) and  y (uniprotkb:y) is inhibited (SBO:0000407) if x (uniprotkb:x) is phosphorylated (GO:0016310) on residue $i$.\\
positive regulation by phosphorylation & interaction (MI:0407) between x (uniprotkb:x) and  y (uniprotkb:y) is activated (SBO:0000461) if x (uniprotkb:x) is phosphorylated (GO:0016310) on residue $i$.\\
\bottomrule
\end{tabular}
\end{table*}

\section{Software Implementation}
\label{sec:implementation}

We implemented protein hypernetworks as a JAVA\textsuperscript{TM} based software suite. 
The suite consists of \emph{ProteinHypernetworkEditor} that allows the definition and editing of protein hypernetworks, and \emph{ProteinHypernetwork} that imple\-ments the prediction methods presented in the main article.
Further, both tools provide a graphical user interface and extensive visualization and import/export capabilities. 
The software suite can be obtained at {\href{http://www.rahmannlab.de/research/hypernetworks}{http://www.rahmannlab.de/research/hypernetworks}}.

\subsection{Representation of Constraints}
\label{section:representation}
An important challenge is the definition of a widely accepted format for the interchange of interaction dependencies or constraints.
While SBML \citep{sbml} is suitable in principle, it provides a biochemical view of interactions and therefore contains overhead that is unnecessary for the definition of a protein hypernetwork.
Instead we propose a two level approach for the interchange of constraints.
\paragraph{Level 1: Structured Text.}
\citet{ceol2008} proposed a machine readable structured abstract that should be published along with papers on protein interactions to allow automated curation (see also \citep{Leitner:2010fk}).
The format combines human-readable sentences with machine readable annotation, and is already capable of representing the expected types of constraints.
For example, a pair of mutual exclusive interactions (as reported by \citet{jung2010}) can be represented as follows:
''ARC40 (uniprotkb:P38328) is a \emph{competitor (MI:0941)} of BEM2 (uniprotkb:P39960) for \emph{interaction (MI:0317)} with CLA4 (uniprotkb:P48562)``.
Table \ref{table:structured_text} provides a generalized representation for the major types of interaction constraints.

\paragraph{Level 2: HypernetworkML.} \mbox{ }
With the hypernetwork markup language (HypernetworkML, \citet{hypernetworkml}) we provide an XML-based file format \citep{xml} that is more suitable for possible large-scale studies providing many constraints at once and for permanent storage of the data.
HypernetworkML is a combination of two established XML based formats:
Interactions and proteins are represented as a graph using GraphML \citep{graphml} whereas embedded MathML \citep{mathml} is used for a propositional logic definition of constraints.
Hence, HypernetworkML is capable to represent a complete protein hypernetwork while maintaining compatibility with known standards.

\subsection{Resource Consumption}
A complex prediction on the defined yeast hypernetwork takes 12 seconds using all 4 cores of an Intel\textsuperscript{\textregistered} Core\textsuperscript{TM} i5 CPU with 2.8GHz.
The prediction of the PIS of 4579 single protein perturbations takes 8 seconds.
In both cases the software uses approximately 750MB RAM during prediction.

\end{document}